\begin{document}

\title{Polar Coding for the Cognitive Interference Channel with Confidential Messages}
\author{Mengfan~Zheng, Wen~Chen and Cong~Ling
\thanks{M. Zheng and W. Chen are with the Department of Electronic Engineering at Shanghai Jiao Tong University, Shanghai, China. Emails: \{zhengmengfan, wenchen\}@sjtu.edu.cn. C. Ling is with the Department of Electrical and Electronic Engineering at Imperial College London, United Kingdom. Email: c.ling@imperial.ac.uk.

}
}

\maketitle

\begin{abstract}
  In this paper, we propose a low-complexity, secrecy capacity achieving polar coding scheme for the cognitive interference channel with confidential messages (CICC) under the strong secrecy criterion. Existing polar coding schemes for interference channels rely on the use of polar codes for the multiple access channel, the code construction problem of which can be complicated. We show that the whole secrecy capacity region of the CICC can be achieved by simple point-to-point polar codes due to the cognitivity, and our proposed scheme requires the minimum rate of randomness at the encoder. 
  
\end{abstract}

\begin{IEEEkeywords}
	Polar codes, cognitive interference channel, physical layer security, superposition coding.
\end{IEEEkeywords}

\section{Introduction}

  Cognitive radio \cite{mitola2000cognitive} has received increasing attention due to its capability of exploiting the under-utilized spectrum resource, as the scale of wireless networks has been growing drastically nowadays. The cognitive interference channel (CIC) is a typical model for the study of cognitive radios. In this model, a primary user (can be thought of as a licensed user of a frequency band) and a cognitive user (can be thought of as an unlicensed user wishing to share the same frequency band) who has non-causal knowledge of the primary user's message transmit their own messages to their own destinations at the same time. The communication problem in the CIC has been studied in \cite{Devroye2006CR,Devroye2006CRC,wu2007CR,Jovicic2009CR,Jiang2008CR,Rini2011cognitive}. The security issue of the CIC was first considered in \cite{liang2009cognitive}, which gave the capacity-equivocation region of the CIC with confidential messages (CICC) under the weak secrecy criterion. A more general expression for the achievable rate region of the CICC with additional randomness constraints was derived in \cite{Watanabe2014cognitive} under the strong secrecy criterion, which coincides with the result of \cite{liang2009cognitive}.

  In this paper, we aim to design a polar coding scheme to achieve the whole achievable rate region of \cite{Watanabe2014cognitive}. Polar codes \cite{arikan2009channel}, originally targeted for achieving the symmetric capacity of point-to-point channels, have recently been shown to work for multi-user channels as well. It is shown that polar codes achieve the capacity regions or the known achievable rate regions of multiple access channels (MAC) \cite{arikan2012sw,sasoglu2013mac,abbe2012mmac,mahdavifar2016uniform}, broadcast channels \cite{goela2015broadcast,mondelli2015marton}, and interference channels (IC) \cite{wang2015channel,zheng2016polarIC}. In the area of physical layer security, polar codes have been shown to achieve the secrecy capacity of wiretap channels \cite{mahdavifar2011achieving,koyluoglu2012polar,sasoglu2013strong,chou2016broadcast,wei2016generalwt,gulcu2017wiretap}, and the secrecy capacity regions or the known secrecy rate regions of MAC wiretap channels \cite{chou2016MACW,wei2016generalwt}, broadcast channels with confidential messages \cite{chou2016broadcast,wei2016generalwt,gulcu2017wiretap}, IC with confidential messages \cite{wei2016generalwt}, two-way wiretap channels \cite{zheng2016tw}, and bidirectional broadcast channels with common and confidential messages \cite{andersson2013broad}. A capacity achieving secrecy polar coding scheme usually requires some eavesdropper channel information at the transmitter, which can be a drawback in practice. However, this is not a problem in the CICC since the assumption that the cognitive transmitter knows the channel information of both receivers is quite reasonable. Thus, the CICC can be a scenario where secrecy polar coding can be practically used.
  
  The CICC differs from the IC with confidential messages considered by \cite{wei2016generalwt} in that only the cognitive transmitter has confidential messages, and the cognitive transmitter has non-causal knowledge about the primary user's messages. The cognitivity not only enlarges the achievable rate region of an IC, but also can help simplify the code design.  As shown in \cite{wang2015channel,zheng2016polarIC,wei2016generalwt}, existing polar code designs for ICs that can achieve optimal rate regions require the use of the permutation based MAC polarization \cite{arikan2012sw,mahdavifar2016uniform}, as it is currently the only method that can achieve the whole achievable rate region of a MAC directly without time sharing or rate splitting. As far as we know, the practicality of this method remains open since the induced random variable by the permutation can be very complicated. In this paper, we show that the whole achievable rate region of the CICC can be achieved by point-to-point polar codes in conjunction with properly designed chaining schemes. 
  
  We summarize the contributions of this paper as follows.
  \begin{itemize}
  	\item We propose a low-complexity polar coding scheme for the general CICC that achieves the whole secrecy capacity region under the strong secrecy criterion, without any assumption on channel symmetry or degradation. 
  	\item We avoid using MAC polarization, which is a common ingredient of polar code designs for ICs but may increase system complexity, and develop a capacity achieving scheme that only requires point-to-point polar codes.
  	\item Secrecy coding schemes require a large amount of randomness in order to protect the confidential message or perform channel prefixing. We consider the randomness required by the encoder as a limited resource and show that our proposed scheme requires the minimum generating rate of randomness. 
  	\item We show that our proposed scheme is a general solution for several other multi-user polar coding problems, including the CIC without secrecy requirement.
  \end{itemize}
  
  The rest of this paper is organized as follows. In Section \ref{S:ProblemState}, we introduce the CICC model and the achievable rate region. In Section \ref{S:PolarPri}, we review some background knowledge on polar codes. Details of our proposed scheme are presented in Section \ref{S:PolarScheme}. We analyze the performance of our proposed scheme in Section \ref{S:Perf}. In Section \ref{S:Conclusion} we discuss some extensions of our proposed scheme.

  \textit{Notations:} In this paper, $[N]$ denotes the index set of $\{1,2,...,N\}$. For any $\mathcal{A}\subset [N]$, $X^{\mathcal{A}}$ denotes the subvector $\{X^i:i\in\mathcal{A}\}$ of $X^{1:N}\triangleq\{X^1,X^2,...,X^N\}$. The generator matrix of polar codes is defined as $\mathbf{G}_N=\mathbf{B}_N \textbf{F}^{\otimes n}$ \cite{arikan2009channel}, where $N=2^n$ with $n$ being an arbitrary integer, $\mathbf{B}_N$ is the bit-reversal matrix, and $\textbf{F}=
  \begin{bmatrix}
  1 & 0 \\
  1 & 1
  \end{bmatrix}$. $H_{q}(X)$ stands for the entropy of $X$ with $q$-based logarithm. 
  
\section{Problem Statement}
\label{S:ProblemState}
 \subsection{Channel Model}

\newtheorem{definition}{Definition}
 \begin{definition}
 	\label{def:CIC}
 	A 2-user CIC consists of two input alphabets $\mathcal{X}_1$ and $\mathcal{X}_2$, two output alphabets $\mathcal{Y}_1$ and $\mathcal{Y}_2$, and a probability transition function $P_{Y_1Y_2|X_1X_2}(y_1,y_2|x_1,x_2)$, where $x_1\in\mathcal{X}_1$ and $x_2\in\mathcal{X}_2$ are channel inputs of transmitter 1 and 2, respectively, and $y_1\in\mathcal{Y}_1$ and $y_2\in\mathcal{Y}_2$ are channel outputs of receiver 1 and 2, respectively. The conditional joint probability distribution of the 2-user CIC over $N$ channel uses can be factored as
 	\begin{align}
 	&P_{Y_1^{1:N}Y_2^{1:N}|X_1^{1:N}X_2^{1:N}}(y_1^{1:N},y_2^{1:N}|x_1^{1:N},x_2^{1:N})\nonumber\\
 	&~~=\prod_{j=1}^{N}P_{Y_1Y_2|X_1X_2}(y_1^j,y_2^j|x_1^j,x_2^j).
 	\end{align}
 \end{definition}

  \begin{figure}[tb]
  	\centering
  	\includegraphics[width=8cm]{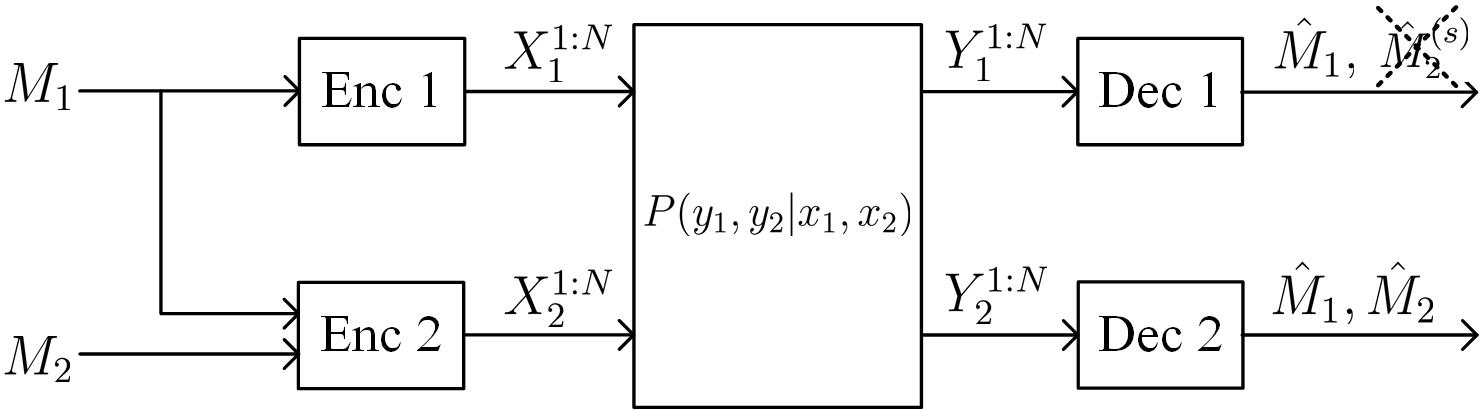}
  	\caption{The cognitive interference channel with confidential messages.} \label{fig:CICC}
  \end{figure}
   In this channel, transmitter $i$ ($i=1,2$) sends message $M_i$ to receiver $i$. Receiver 1 is required to decode $M_1$ only while receiver 2 is required to decode both $M_1$ and $M_2$\footnote{This is a case where the capacity and capacity-equivocation regions of a CIC is known \cite{liang2009cognitive}. In the general case, the capacity region of a CIC is still unknown \cite{Rini2011cognitive}.}. Since transmitter 2 has non-causal knowledge about transmitter 1's message, $M_1$ can be jointly transmitted by the two transmitters. If transmitter 2 wishes to keep part of its message (denoted as $M_2^{(s)}$) secret from receiver 1, then this model is called the CICC, as shown in Fig. \ref{fig:CICC}.
 
 \begin{definition}
 	A $(2^{NR_1},2^{NR_2},N)$ code for the 2-user CICC consists of two message sets $\mathcal{M}_1=\{1,2,...,[2^{NR_1}]\}$ and $\mathcal{M}_2=\{1,2,...,[2^{NR_2}]\}$, two encoding functions
 	\begin{equation}
 	x_1^N(m_1):\mathcal{M}_1\mapsto \mathcal{X}_1^N \text{  and  } x_2^N(m_1,m_2):\mathcal{M}_1\times\mathcal{M}_2\mapsto \mathcal{X}_2^N,
 	\end{equation}
 	and two decoding functions
 	\begin{equation}
 	\hat{m}_1(\mathbf{y}_1^N):\mathcal{Y}_1^N\mapsto \mathcal{M}_1 \text{  and  } \hat{m}_2(\mathbf{y}_2^N):\mathcal{Y}_2^N\mapsto \mathcal{M}_1\times\mathcal{M}_2.
 	\end{equation}
 \end{definition}
 
    For a given $(2^{NR_1},2^{NR_2},N)$ code for the 2-user CICC, reliability is measured by the average probability of error $P_e(N)$, defined as 
 	\begin{align}
 	&P_e(N)=\frac{1}{2^{N(R_1+R_2)}}\sum_{(M_1,M_2)\in\mathcal{M}_1\times \mathcal{M}_2} \nonumber\\
	&~~~~\mathrm{Pr}\Big{\{}\big{(}\hat{m}_1(Y_1^{1:N}),\hat{m}_2(Y_2^{1:N})\big{)}\neq (M_1,M_1,M_2)|(M_1,M_2)~\text{sent}\Big{\}},
 	\end{align}
 	where $(M_1,M_2)$ is assumed to be uniformly distributed over $\mathcal{M}_1\times \mathcal{M}_2$. Secrecy is measured the information leakage (strong secrecy)
 	\begin{equation}
 	L(N)=I(Y_1^{1:N};M_2^{(s)}),
 	\end{equation}
 	or the information leakage rate (weak secrecy)
 	\begin{equation}
 	L_R(N)=\frac{1}{N}L(N).
 	\end{equation}
  
 \subsection{Achievable Rate Region}
 Let $R_{2s}$ be the transmission rate of confidential message $M_2^{(s)}$. A rate triple $(R_1,R_2,R_{2s})$ is said to be achievable for the 2-user CICC if there exists a coding scheme such that
 \begin{equation}
 \label{RC}
 \lim\limits_{N \to \infty}{P_e(N)}=0;
 \end{equation}
 and
 \begin{equation}
 \label{SSC}
 \lim\limits_{N \to \infty}{L(N)}=0 \text{ (strong secrecy), or}
 \end{equation}
 \begin{equation}
 \label{WSC}
 \lim\limits_{N \to \infty}{L_R(N)}=0 \text{ (weak secrecy)}.
 \end{equation}
 The capacity-equivocation region of the CICC (under the constraint that the cognitive receiver must decode both transmitters' messages) was derived in \cite{liang2009cognitive} and is shown below.
 \newtheorem{theorem}{Theorem}
 \begin{theorem}[\cite{liang2009cognitive}]
 \label{Capacity-1}
 The capacity-equivocation region of the CICC under the weak secrecy criterion is
 \begin{align}
 &\mathcal{R}= \bigcup_{P\in\mathcal{P}}\nonumber\\
 &\left\lbrace
 \begin{matrix}
 \left(
 \begin{array}{ccc}
 R_1\\
 R_2\\
 R_{2s}
 \end{array}
 \right)&
 \left|
 \begin{array}{ccc}
 \begin{array}{lll}
 0\leq R_1 \leq \min\{I(U;X_1;Y_1),I(U,X_1;Y_2)\}\\
 0\leq R_2 \leq I(U,V;Y_2|X_1)\\
 R_1+R_2 \leq \min\{I(U,X_1;Y_1),I(U,X_1;Y_2)\}\\
 ~~~~~~+I(V;Y_2|U,X_1)\\
 0\leq R_{2s} \leq I(V;Y_2|U,X_1)-I(V;Y_1|U,X_1)
 \end{array}
 \end{array}\right.
 \end{matrix}
 \right\rbrace,\nonumber
 \end{align}
 where $\mathcal{P}=\{P_{UVX_1X_2} \text{ factorizing as: } P_{U,V,X_1}P_{X_1|V}\}$, and the cardinality bounds for auxiliary random variables $U$ and $V$ are 
 \begin{align*}
 |\mathcal{U}|&\leq |\mathcal{X}_1|\cdot |\mathcal{X}_2|+3,\\
 |\mathcal{V}|&\leq |\mathcal{X}_1|^2\cdot |\mathcal{X}_2|^2+4|\mathcal{X}_1|\cdot |\mathcal{X}_2|+3.
 \end{align*}
\end{theorem}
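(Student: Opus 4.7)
The plan is to prove achievability and converse separately via standard information-theoretic techniques. For achievability, I would fix a joint distribution $P_{U,V,X_1}P_{X_2|V}\in\mathcal{P}$, split $M_2$ into an open part $M_2^{(o)}$ of rate $R_2-R_{2s}$ and a confidential part $M_2^{(s)}$ of rate $R_{2s}$, and combine superposition coding with wiretap-style random binning. Concretely, I would generate a $U$-codebook carrying $(M_1,M_2^{(o)})$ i.i.d.\ from $P_U$; then, for each $(u^{1:N},x_1^{1:N})$ pair with $x_1^{1:N}$ drawn conditionally from $P_{X_1|U}$, a superposed $V$-codebook of rate $R_{2s}+I(V;Y_1|U,X_1)+\delta$ partitioned into $2^{NR_{2s}}$ bins. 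The cognitive transmitter, knowing $M_1$ non-causally, selects $U^{1:N}$ and draws $V^{1:N}$ uniformly from the bin indexed by $M_2^{(s)}$, then produces $X_2^{1:N}$ symbolwise through $P_{X_2|V}$. Receiver 2 jointly decodes $(U,X_1,V)$, yielding the $Y_2$-bounds; receiver 1 decodes only $(U,X_1)$, yielding the $Y_1$-bounds; weak secrecy follows from the classical wiretap lemma applied to the $V$-layer with randomization rate $I(V;Y_1|U,X_1)$.

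For the converse, I would start from Fano's inequality, obtaining $H(M_1|Y_1^{1:N})\leq N\epsilon_N$ and $H(M_1,M_2|Y_2^{1:N})\leq N\epsilon_N$, and then choose auxiliaries $U_j\triangleq(M_1,Y_1^{1:j-1},Y_2^{j+1:N})$ and $V_j\triangleq(M_2,U_j)$ together with a time-sharing variable $T$ uniform on $[N]$. Each single-letter rate bound then follows from a telescoping rewriting combined with the Csisz\'ar sum identity, with $X_1\triangleq X_{1,T}$ and $X_2\triangleq X_{2,T}$. The equivocation bound uses $NR_{2s}-N\epsilon_N\leq I(M_2^{(s)};Y_2^{1:N})-I(M_2^{(s)};Y_1^{1:N})$, which reduces to $I(V;Y_2|U,X_1)-I(V;Y_1|U,X_1)$ under the same identifications. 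The cardinality bounds on $|\mathcal{U}|$ and $|\mathcal{V}|$ follow from Carath\'eodory-type arguments preserving the joint marginal of $(X_1,X_2)$ together with the relevant mutual-information quantities on each side.

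The step I expect to be hardest is arranging the converse identifications so that the claimed factorization $P_{U,V,X_1}P_{X_2|V}$ emerges cleanly. Since $X_1$ is transmitted by the primary user yet known non-causally at the cognitive transmitter, I must verify that $X_1$ can be grouped with $(U,V)$ while preserving the Markov chain $(U,X_1)-V-X_2$ after averaging over $T$. This is precisely the feature that distinguishes the CICC from a generic interference channel with confidential messages, and it requires careful bookkeeping of conditioning variables both in the telescoping arguments and in the final cardinality reduction. By contrast, the achievability arguments are routine once the superposition--binning structure and the wiretap randomization rate $I(V;Y_1|U,X_1)$ are in place.
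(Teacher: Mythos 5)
This theorem is imported verbatim from \cite{liang2009cognitive}; the paper offers no proof of it, so there is no internal argument to compare yours against. Your outline follows essentially the route the cited work takes: superposition coding plus Wyner-type binning for achievability, and a Csisz\'ar--K\"orner-style converse with auxiliaries $U_j=(M_1,Y_1^{1:j-1},Y_2^{j+1:N})$, $V_j=(M_2,U_j)$, a uniform time-sharing variable, and Carath\'eodory for the cardinality bounds. (You also silently repair two typos in the statement as printed: the factorization should read $P_{U,V,X_1}P_{X_2|V}$, consistent with the Markov chain $(U,X_1)\leftrightarrow V\leftrightarrow X_2$ of Theorem \ref{Capacity-2}, and $I(U;X_1;Y_1)$ should be $I(U,X_1;Y_1)$.)

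There is, however, one concrete defect in your achievability sketch. You generate the $U$-codebook indexed by $(M_1,M_2^{(o)})$ first and then draw $x_1^{1:N}$ conditionally from $P_{X_1|U}$. That makes transmitter 1's channel input a function of $M_2^{(o)}$, which transmitter 1 does not know --- only transmitter 2 is cognitive. The superposition must be layered the other way: generate $x_1^{1:N}(m_1)$ from $P_{X_1}$, and for each such codeword superimpose a $U$-codebook drawn from $P_{U|X_1}$ carrying the remaining common-layer message; this is precisely the layering the present paper adopts in Section \ref{S:PolarScheme}, where $M_1^{(1)}$ is encoded into $X_1$ and $(M_1^{(2)},M_2^{(c)})$ into $U$ with $X_1$ as side information. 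A second, smaller issue: fixing the $V$-codebook rate at $R_{2s}+I(V;Y_1|U,X_1)+\delta$ only realizes operating points where all non-confidential content of the $V$-layer is dummy randomization; to sweep out the full constraint $R_2\leq I(U,V;Y_2|X_1)$ you must allow the bin contents to carry open message in place of randomness (the tradeoff made explicit by constraint (\ref{ineq-5}) of Theorem \ref{Capacity-2}). Both are repairable bookkeeping matters rather than conceptual gaps; the converse outline is standard and sound, and you correctly identify the factorization/Markov-chain verification as the step needing the most care.
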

  
  As we know, secrecy coding schemes require a large amount of randomness in the encoder. Reference \cite{Watanabe2014cognitive} considered the generating rate of randomness needed by the stochastic encoder as a constraint and developed a more general achievable rate region under the strong secrecy criterion as shown in Theorem \ref{Capacity-2}. In \cite{Watanabe2014cognitive}, besides the common message and the confidential message, transmitter 2's message was further divided into a private message, which should be decoded by receiver 2 but not necessarily be secret from receiver 1. 
  \begin{theorem}[\cite{Watanabe2014cognitive}]
  \label{Capacity-2}
  	
  Let $\mathcal{R}^*$ be closed convex set consisting of rate quadruples $(R_r,R_1,R_{2p},R_{2s})$ for which there exist auxiliary random variables $(U, V)$ such that
  \begin{align}
  (U,X_1)&\leftrightarrow V \leftrightarrow X_2,\label{constraint-1}\\
  (U,V)&\leftrightarrow (X_1,X_2) \leftrightarrow (Y_1,Y_2),\label{constraint-2}
  \end{align}  
  and
  \begin{align}
  R_1 &\leq \min\{I(U,X_1;Y_1),I(U,X_1;Y_2)\},\label{ineq-1}\\
  R_{2p}+R_{2s} &\leq I(U,V;Y_2|X_1),\label{ineq-2}\\
  R_1+R_{2p}+R_{2s} &\leq I(V;Y_2|U,X_1)\nonumber\\
  &~~~~+\min\{I(U,X_1;Y_1),I(U,X_1;Y_2)\},\label{ineq-3}\\
  R_{2s} &\leq I(V;Y_2|U,X_1)-I(V;Y_1|U,X_1),\label{ineq-4}\\
  R_{2p}+R_r &\geq I(X_2;Y_1|U,X_1),\label{ineq-5}\\
  R_r &\geq I(X_2;Y_1|U,V,X_1).\label{ineq-6}
  \end{align}
  Then $\mathcal{R}^*$ is an achievable rate region for the CICC. The cardinality bounds for auxiliary random variables $U$ and $V$ are the same as in Theorem \ref{Capacity-1}.
  \end{theorem}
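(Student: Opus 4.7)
The plan is to establish achievability by a random coding argument combining three classical ingredients: superposition coding for the common message $M_1$, channel prefixing through the auxiliary $V$ (which also carries the non-causal knowledge of $M_1$ available at transmitter $2$), and Wyner-style wiretap binning for the confidential message $M_{2s}$. The Markov structure (\ref{constraint-1})-(\ref{constraint-2}) is precisely what allows these three layers to be stacked without interference among the roles of $(U,V,X_1,X_2)$.

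First, I would construct a layered random codebook. Fix a joint distribution $P_{UVX_1X_2}$ satisfying (\ref{constraint-1})-(\ref{constraint-2}). Generate $2^{NR_1}$ i.i.d.\ sequences $U^{1:N}(m_1)$ according to $P_U$, and for each $m_1$ draw $X_1^{1:N}(m_1)$ conditionally i.i.d.\ from $P_{X_1|U}$ given $U^{1:N}(m_1)$. For each $m_1$, build a bank of $2^{N(R_{2p}+R_{2s}+R_d)}$ sequences $V^{1:N}(m_1,m_{2p},m_{2s},l)$ drawn conditionally i.i.d.\ from $P_{V|UX_1}$ given $(U^{1:N}(m_1),X_1^{1:N}(m_1))$, where $l$ indexes a dummy randomization of rate $R_d$. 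Given the selected $V^{1:N}$, draw $X_2^{1:N}$ conditionally i.i.d.\ from $P_{X_2|V}$, which consumes an additional prefix randomness of rate roughly $H(X_2|V)$.

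Second, I would analyze reliability by standard joint-typicality decoding. Receiver $1$ decodes $(U^{1:N},X_1^{1:N})$, succeeding with high probability whenever $R_1\leq I(U,X_1;Y_1)$. Receiver $2$ decodes $(U^{1:N},X_1^{1:N})$ and then $V^{1:N}$, giving $R_1\leq I(U,X_1;Y_2)$ and $R_{2p}+R_{2s}+R_d\leq I(V;Y_2|U,X_1)$, which together yield (\ref{ineq-1})-(\ref{ineq-3}). For strong secrecy I would invoke a soft-covering/resolvability bound: if $R_d+R_{2p}\geq I(V;Y_1|U,X_1)$, then, averaged over the codebook, the induced distribution of $Y_1^{1:N}$ given $M_{2s}$ is close in total variation to the distribution induced by $(U,X_1)$ alone, which forces $L(N)\to 0$ and yields (\ref{ineq-4}). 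The encoder randomness rate $R_r$ must then cover both the prefix layer and whatever part of the dummy index is not absorbed by $M_{2p}$, producing the bounds (\ref{ineq-5})-(\ref{ineq-6}).

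The main obstacle is the joint accounting of randomness. The same randomness pool must simultaneously (i) implement the channel prefix $P_{X_2|V}$, (ii) confuse receiver $1$ about $M_{2s}$ via the dummy index $l$, and (iii) allow receiver $2$ to uniquely recover the confidential message. Different splits of $R_r$ across these roles trace out different corner points of $\mathcal{R}^*$; showing that $R_r\geq I(X_2;Y_1|U,V,X_1)$ suffices together with $R_{2p}+R_r\geq I(X_2;Y_1|U,X_1)$ requires the observation that randomness not consumed by the wiretap binning can be ``recycled'' into the prefix construction, so that the private message itself effectively serves as part of the secrecy randomness. This bookkeeping, rather than any single coding ingredient, is the delicate step in tightening (\ref{ineq-5})-(\ref{ineq-6}) to their claimed minimal form.
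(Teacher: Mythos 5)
Your proposal reconstructs the classical random-coding achievability argument of the cited reference, whereas the paper never re-proves Theorem~\ref{Capacity-2} that way: its ``proof'' is the explicit polar construction of Sections~\ref{S:PolarScheme}--\ref{S:Perf} (two-layer point-to-point polar codes with cross-transmitter chaining, source-coding-based channel prefixing, and Lemmas~\ref{lemma:0}--\ref{lemma:3} replacing typicality and soft covering). That difference of route would be fine on its own, but your construction as written has a concrete gap: you index the $U$-codebook by $m_1$ only, so the private and confidential messages live entirely in the $V$-layer and your reliability analysis caps them at $R_{2p}+R_{2s}+R_d\leq I(V;Y_2|U,X_1)$. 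Since $I(U,V;Y_2|X_1)=I(U;Y_2|X_1)+I(V;Y_2|U,X_1)$, you cannot reach the boundary of (\ref{ineq-2}) whenever $I(U;Y_2|X_1)>0$. You must rate-split transmitter~2's non-confidential message so that a portion of it rides on $U$ (the paper's $M_2^{(c)}$, and its observation in Section~\ref{S:ARR} that when transmitter~2 stops helping with $M_1$, the $U$-layer can carry up to $I(U;Y_2|X_1)$ of transmitter~2's own message). Without that split, the region you achieve is strictly smaller than $\mathcal{R}^*$.

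The second weak point is the randomness accounting you yourself flag as delicate but do not carry out. Drawing $X_2^{1:N}$ i.i.d.\ from $P_{X_2|V}$ costs prefix randomness of rate about $H(X_2|V)$, which does not meet (\ref{ineq-6}); tightening this to $R_r\geq I(X_2;Y_1|U,V,X_1)$ requires a separate resolvability (or likelihood-encoder) argument for the prefix layer showing that a seed of that rate suffices to make receiver~1's view of the synthesized $X_2$ indistinguishable from the i.i.d.\ prefix, and that the remaining ``deterministic'' part of $X_2$ can be generated from $(U,V,X_1)$ without fresh randomness. This is precisely what the paper's channel-prefixing subsection does constructively: only the positions in $\mathcal{H}^{(N)}_{X_2|X_1UV}\setminus\mathcal{H}^{(N)}_{X_2|Y_1X_1UV}$ consume fresh uniform symbols, the seed $w_r$ on $\mathcal{H}^{(N)}_{X_2|Y_1X_1UV}$ is reused across blocks, and (\ref{AsympRr}) identifies the limit with $I(X_2;Y_1|U,V,X_1)$. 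Similarly, the claim that $M_{2p}$ can ``recycle'' into the binning randomness to give (\ref{ineq-5}) needs the explicit identity $I(X_2;Y_1|U,X_1)=I(V;Y_1|U,X_1)+I(X_2;Y_1|U,V,X_1)$ together with a secrecy proof that tolerates the eavesdropper knowing nothing about $M_{2p}$; stating the split of $R_r$ is not yet a proof that strong secrecy survives it.
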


\section{Polar Coding Preliminaries}
\label{S:PolarPri}

  Polar codes were originally invented to achieve the symmetric capacity of discrete memoryless channels (DMC) \cite{arikan2009channel}. A method for constructing polar codes for asymmetric channels without alphabet extension was introduced in \cite{honda2013asymmetric}. An improved low-complexity method was independently developed in \cite{chou2016broadcast} and \cite{gad2016asymmetric}, which overcomes the major drawback of the scheme in \cite{honda2013asymmetric} that the encoder and decoder need to share a large amount of random mappings. Now we briefly review this method.
  
  First, we introduce the lossless polar source coding techniques for arbitrary discrete memoryless sources in \cite{arikan2010source,sasoglu2011polar}. Let $(X,Y)\sim p_{X,Y}$ be a random variable pair over $(\mathcal{X}\times \mathcal{Y})$, where $X$ is the source to be compressed, $Y$ is \textit{side information} of $X$, $|\mathcal{X}|=q_X$ is a prime number\footnote{We only consider the prime number case because polarization in this case is similar to the binary case. For composite $q_X$, one needs to use some special type of quasigroup operation instead of group operation to make polarization happen \cite{sasoglu2011polar}.}, and $\mathcal{Y}$ is an arbitrary countable set. Let $U^{1:N}=X^{1:N}\mathbf{G}_N$ be a transformation on $N$ successive samples of $X$. As $N$ goes to infinity, polarization happens in the sense that $U^j$ ($j\in [N]$) becomes either almost independent of $(Y^{1:N},U^{1:j-1})$ and uniformly distributed, or almost determined by $(Y^{1:N},U^{1:j-1})$. For $\delta_N=2^{-N^\beta}$ with $\beta \in (0,1/2)$, we can define the following sets of polarized indices:
  \begin{align}
  \mathcal{H}^{(N)}_{X|Y}&=\{j\in [N]:H_{q_X}(U^j|Y^{1:N},U^{1:j-1})\geq 1-\delta_N\},\label{HXY}\\
  \mathcal{L}^{(N)}_{X|Y}&=\{j\in [N]:H_{q_X}(U^j|Y^{1:N},U^{1:j-1})\leq \delta_N\},\label{LXY}
  \end{align}
  which satisfy \cite{sasoglu2011polar,chou2016broadcast} 
  \begin{align}
  \lim_{N\rightarrow \infty}\frac{1}{N}|\mathcal{H}^{(N)}_{X|Y}|&=H_{q_X}(X|Y),\\
  \lim_{N\rightarrow \infty}\frac{1}{N}|\mathcal{L}^{(N)}_{X|Y}|&=1-H_{q_X}(X|Y).
  \end{align}

  The polarization of a single source $X\in\mathcal{X}$ can be seen as a special case of the above case by letting $\mathcal{Y}=\emptyset$. Similarly we can define the following sets of polarized indices:
  \begin{align}
  \mathcal{H}^{(N)}_X&=\{j\in [N]:H_{q_X}(U^j|U^{1:j-1})\geq 1-\delta_N\},\label{HX}\\
  \mathcal{L}^{(N)}_X&=\{j\in [N]:H_{q_X}(U^j|U^{1:j-1})\leq \delta_N\},\label{LX}
  \end{align}
  with
  \begin{align}
  \lim_{N\rightarrow \infty}\frac{1}{N}|\mathcal{H}^{(N)}_X|&=H_{q_X}(X),\\
  \lim_{N\rightarrow \infty}\frac{1}{N}|\mathcal{L}^{(N)}_X|&=1-H_{q_X}(X).
  \end{align}

  Next, we can consider polar coding for arbitrary discrete memoryless channels with the aforementioned knowledge. Let $W(Y|X)$ be a DMC with a $q_X$-ary input alphabet $\mathcal{X}$, where $q_X$ is a prime number, and an arbitrary countable output alphabet $\mathcal{Y}$. Let $U^{1:N}=X^{1:N}\mathbf{G}_N$ and define $\mathcal{H}^{(N)}_{X|Y}$, $\mathcal{L}^{(N)}_{X|Y}$, $\mathcal{H}^{(N)}_X$ and $\mathcal{L}^{(N)}_X$, as in (\ref{HXY}), (\ref{LXY}), (\ref{HX}) and (\ref{LX}), respectively. Define the \textit{information set} (or \textit{reliable set}), \textit{frozen set} and \textit{almost deterministic set} respectively as follows:
  \begin{align}
  \mathcal{I}&\triangleq \mathcal{H}_X^{(N)}\cap \mathcal{L}_{X|Y}^{(N)}, \label{PCAC-I}\\
  \mathcal{F}&\triangleq \mathcal{H}_X^{(N)}\cap (\mathcal{L}_{X|Y}^{(N)})^C, \label{PCAC-Fr}\\
  \mathcal{D}&\triangleq (\mathcal{H}_X^{(N)})^C.\label{PCAC-Fd}
  \end{align}
  
  $\mathcal{D}$ is called almost deterministic because part of its indices, $(\mathcal{H}_X^{(N)})^C\cap (\mathcal{L}_{X}^{(N)})^C$, are not fully polarized. The encoding procedure goes as follows: $\{u^j\}_{j\in \mathcal{I}}$ carry information, $\{u^j\}_{j\in \mathcal{F}}$ are filled with uniformly distributed frozen symbols (shared between the transmitter and the receiver), and $\{u^j\}_{j\in \mathcal{D}}$ are assigned by random mappings $\lambda_j(u^{1:j-1})$ which randomly generate an output $u\in\mathcal{X}$ with probability $P_{U^j|U^{1:j-1}}(u|u^{1:j-1})$. In order for the receiver to decode successfully, \cite{chou2016broadcast} and \cite{gad2016asymmetric} proposed to send part of the almost deterministic symbols, $\{u^j\}_{j\in (\mathcal{H}_X^{(N)})^C\cap (\mathcal{L}_{X|Y}^{(N)})^C}$, to the receiver with some reliable error-correcting code separately, the rate of which is shown to vanish as $N$ goes to infinity.
  
  Having received $y^{1:N}$ and recovered $\{u^j\}_{j\in (\mathcal{H}_X^{(N)})^C\cap (\mathcal{L}_{X|Y}^{(N)})^C}$, the receiver decodes $u^{1:N}$ with a successive cancellation decoder (SCD):
  \begin{align}
  &\bar{u}^{j}=\nonumber\\
  &\begin{cases}
  u^j,~~~~~~~~~~~~~~~~~~~~~~~~~~~~~~~~~~~~~~~~~~~~~~\text{if } j\in (\mathcal{L}_{X|Y}^{(N)})^C,\\
  \arg\max_{u\in\{0,1\}}P_{U^{j}|Y^{1:N}U^{1:j-1}}(u|y^{1:N},u^{1:j-1}),~~\text{if } j\in \mathcal{L}_{X|Y}^{(N)}.
  \end{cases}\nonumber
  \end{align}
  
  The transmission rate of this scheme, $R=|\mathcal{I}|/N$, is shown to achieve channel capacity \cite{honda2013asymmetric}
  \begin{equation}
  \label{PolarRate}
  \lim_{N\rightarrow \infty}R=I(X;Y).
  \end{equation}

  \section{Proposed Polar Coding Scheme}
  \label{S:PolarScheme}

  
  Our encoding scheme is illustrated in Fig. \ref{fig:encoding}. Transmitter 1's message $M_1$ is split into two parts, $M_1^{(1)}$ and $M_1^{(2)}$, carried by transmitter 1's and transmitter 2's signals respectively. Transmitter 2's message $M_2$ is split into three parts, a common message $M_2^{(c)}$ intended for both receivers, a private message $M_2^{(p)}$ intended only for receiver 2, and a confidential message $M_2^{(s)}$ intended only for receiver 2 and must be secured from receiver 1. Details of transmitter 2's encoding procedure are as follows. $M_1^{(2)}$ and $M_2^{(c)}$ are encoded into $U^{1:N}$ first, $M_2^{(p)}$ and $M_2^{(s)}$ are then superimposed on $(U^{1:N},X_1^{1:N})$ and encoded into $V^{1:N}$ (known as superposition coding). Finally, randomness $M_R$ is added to $V^{1:N}$ to generate transmitter 2's final codeword $X_2^{1:N}$ (known as channel prefixing). Note that $X_1^{1:N}$ can be seen as the known interference to transmitter 2. Thus, this superposition coding scheme also involves the idea of dirty paper coding. In the rest of this section, the rates of $M_1^{(1)}$, $M_1^{(2)}$, $M_2^{(c)}$, $M_2^{(p)}$ and $M_2^{(s)}$ will be denoted by $R_1^{(1)}$, $R_1^{(2)}$, $R_2^{(c)}$, $R_2^{(p)}$ and $R_2^{(s)}$, respectively. Notice that in Theorem \ref{Capacity-2}, $R_{2p}$ is the sum of $R_2^{(c)}$ and $R_2^{(p)}$ defined here.
  \begin{figure}[tb]
  	\centering
  	\includegraphics[width=8.5cm]{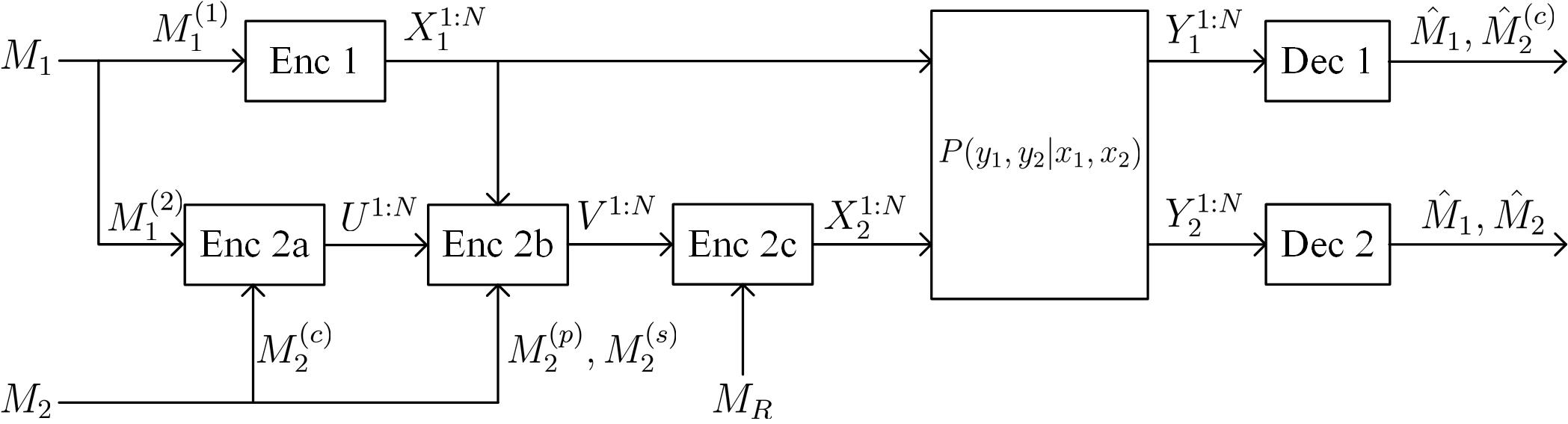}
  	\caption{Encoding scheme for the CICC.} \label{fig:encoding}
  \end{figure}
  
  A common problem of polar code designs for general multi-user channels is that the polarized sets for different receivers are usually not aligned (meaning that there is no inclusion relation among them), making it difficult to achieve the optimal rate region within a single transmission block. In this paper, we adopt the chaining method \cite{hassani2014universal} which connects a series of encoding blocks to solve this problem. In our scheme, $m$ ($m\geq 1$) encoding blocks are chained into a \textit{frame}, and two receivers decode a frame in reverse orders. Our scheme is designed in such a way that receiver 1 (the primary receiver) decodes in the natural order (i.e., from block 1 to block $m$) while receiver 2 (the secondary receiver) decodes in the reverse order (i.e., from block $m$ to block 1).
  
  In this paper, we only discuss the case when random variables $X_1$, $X_2$, $U$ and $V$ all have prime alphabets. Suppose $q_{X_1}=|\mathcal{X}_1|$ and $q_{X_2}=|\mathcal{X}_2|$ are two prime numbers, $q_{U}=|\mathcal{U}|$ is the smallest prime number larger than $|\mathcal{X}_1|\cdot |\mathcal{X}_2|+3$, and $q_{V}=|\mathcal{V}|$ is the smallest prime number larger than $|\mathcal{X}_1|^2\cdot |\mathcal{X}_2|^2+4|\mathcal{X}_1|\cdot |\mathcal{X}_2|+3$. Consider a random variable tuple $(U,V,X_1,X_2,Y_1,Y_2)$ with joint distribution $P_{UVX_1X_2Y_1Y_2}$ that satisfy (\ref{constraint-1}) and (\ref{constraint-2}). The goal of our proposed scheme is to achieve every equation in (\ref{ineq-1})--(\ref{ineq-6}).
  
  \subsection{Common Message Encoding}
  \label{S:CME}
  
  Common messages $M_1^{(1)}$, $M_1^{(2)}$ and $M_2^{(c)}$ are encoded into two sequences of random variables, $X_1^{1:N}$ and $U^{1:N}$. At first glance, we may synthesize two MACs, $P(Y_1|X_1,U)$ and $P(Y_2|X_1,U)$, and design a polar code that works for both of them. This approach requires the use of MAC polarization or rate splitting techniques which will increase the complexity of the scheme. Note that there is no major difference between $M_1^{(2)}$ and $M_2^{(c)}$ in regard to our encoding scheme. They only affects the rate allocation between $M_1$ and $M_2$. Due to this flexibility, we will show that the whole region can be achieved with simple point-to-point polar codes. For simplicity, define $R^{(c)}=R_1^{(2)}+R_2^{(c)}$. 
  From (\ref{ineq-1}) we have
  \begin{equation}
  \label{R-C}
  R_1^{(1)}+R^{(c)}\leq\min\{I(U,X_1;Y_1),I(U,X_1;Y_2)\}.
  \end{equation}
  
  Our approach consists of two layers of coding. In the first layer, $M_1^{(1)}$ is encoded into $X_1^{1:N}$, treating random variable $U$ as noise. In the second layer, $M_1^{(2)}$ and $M_2^{(c)}$ are encoded into $U^{1:N}$ with $X_1^{1:N}$ being treated as side information. The receivers decode $X_1^{1:N}$ first, and then utilize the estimation of $X_1^{1:N}$ to decode $U^{1:N}$. In a conventional MAC case, such an approach can only achieve a corner point of the achievable rate region of a MAC. However, under the cognitive setting, since transmitter 2 knows transmitter 1's message non-causally, and the message carried by $U^{1:N}$ can be allocated flexibly between two transmitters, this approach can achieve more points. We first show how to achieve $R_1^{(1)}+R^{(c)}=\min\{I(U,X_1;Y_1),I(U,X_1;Y_2)\}$ in this subsection and $R_2^{(p)}+R_2^{(s)}=I(U,V;Y_2|X_1)$ in the next subsection, and then prove in Section \ref{S:ARR} that other rate pairs in the achievable rate region in Theorem \ref{Capacity-2} can be achieved by adjusting the ratio between $M_1^{(2)}$ and $M_2^{(c)}$.
  
  Let  $U_1^{1:N}=X_1^{1:N}\mathbf{G}_N$ and $U^{'1:N}=U^{1:N}\mathbf{G}_N$. For $\delta_N=2^{-N^\beta}$ with $\beta \in (0,1/2)$, define the following polarized sets:
  \begin{equation}
  \begin{aligned}
    \mathcal{H}^{(N)}_{X_1}&\triangleq \big{\{}j\in [N]:H_{q_{X_1}}(U_1^j|U_1^{1:j-1})\geq 1-\delta_N \big{\}},\\
  \mathcal{L}^{(N)}_{X_1|Y_1}&\triangleq \big{\{}j\in [N]:H_{q_{X_1}}(U_1^j|Y_1^{1:N}, U_1^{1:j-1})\leq \delta_N \big{\}},\\
  \mathcal{L}^{(N)}_{X_1|Y_2}&\triangleq \big{\{}j\in [N]:H_{q_{X_1}}(U_1^j|Y_2^{1:N}, U_1^{1:j-1})\leq \delta_N \big{\}},\\
  \mathcal{H}^{(N)}_{U|X_1}&\triangleq \big{\{}j\in [N]:H_{q_U}(U^{'j}|X_1^{1:N},U^{'1:j-1})\geq 1-\delta_N \big{\}},\\
  \mathcal{L}^{(N)}_{U|Y_1X_1}&\triangleq \big{\{}j\in [N]:H_{q_U}(U^{'j}|Y_1^{1:N},X_1^{1:N}, U^{'1:j-1})\leq \delta_N \big{\}},\\
  \mathcal{L}^{(N)}_{U|Y_2X_1}&\triangleq \big{\{}j\in [N]:H_{q_U}(U^{'j}|Y_2^{1:N},X_1^{1:N}, U^{'1:j-1})\leq \delta_N \big{\}}.
  \end{aligned}
  \end{equation}
  Then define the following sets of indices for $U_1^{1:N}$:
  \begin{equation}
  \begin{aligned}
  \label{ConstructCommon1}
  \mathcal{I}_{1c}^{(1)}&=\mathcal{H}^{(N)}_{X_1} \cap \mathcal{L}^{(N)}_{X_1|Y_1},\\
  \mathcal{I}_{1c}^{(2)}&=\mathcal{H}^{(N)}_{X_1} \cap \mathcal{L}^{(N)}_{X_1|Y_2},\\
  \mathcal{F}_{1c}&=\mathcal{H}^{(N)}_{X_1} \cap \big{(}\mathcal{L}^{(N)}_{X_1|Y_1}\big{)}^C \cap \big{(}\mathcal{L}^{(N)}_{X_1|Y_2}\big{)}^C,\\
  \mathcal{D}_{1c}&=\big{(}\mathcal{H}^{(N)}_{X_1}\big{)}^C,
  \end{aligned}
  \end{equation}
  where $\mathcal{I}_{1c}^{(1)}$ and $\mathcal{I}_{1c}^{(2)}$ are the reliable sets for receiver 1 and 2 respectively, $\mathcal{F}_{1c}$ is the intersection of two receivers' frozen sets, and $\mathcal{D}_{1c}$ is the almost deterministic set. Similarly define
  \begin{equation}
  \begin{aligned}
  \label{ConstructCommon2}
  \mathcal{I}_{2c}^{(1)}&=\mathcal{H}^{(N)}_{U|X_1} \cap \mathcal{L}^{(N)}_{U|Y_1X_1},\\
  \mathcal{I}_{2c}^{(2)}&=\mathcal{H}^{(N)}_{U|X_1} \cap \mathcal{L}^{(N)}_{U|Y_2X_1},\\
  \mathcal{F}_{2c}&=\mathcal{H}^{(N)}_{U|X_1} \cap \big{(}\mathcal{L}^{(N)}_{U|Y_1X_1}\big{)}^C \cap \big{(}\mathcal{L}^{(N)}_{U|Y_2X_1}\big{)}^C,\\
  \mathcal{D}_{2c}&=\big{(}\mathcal{H}^{(N)}_{U|X_1}\big{)}^C.
  \end{aligned}
  \end{equation} 
  for $U^{'1:N}$. From (\ref{PolarRate}) we have
  \begin{equation}
  \begin{aligned}
  \label{CommonRate}
  \lim\limits_{N\rightarrow \infty} \frac{1}{N}|\mathcal{I}_{1c}^{(1)}|  &=I(X_1;Y_1),~~\lim\limits_{N\rightarrow \infty} \frac{1}{N}|\mathcal{I}_{2c}^{(1)}|=I(U;Y_1|X_1),   \\
  \lim\limits_{N\rightarrow \infty} \frac{1}{N}|\mathcal{I}_{1c}^{(2)}| &=I(X_1;Y_2),~~
  \lim\limits_{N\rightarrow \infty} \frac{1}{N}|\mathcal{I}_{2c}^{(2)}| =I(U;Y_2|X_1).
  \end{aligned}  
  \end{equation}
  
  If we design two separate chaining schemes for $U^{1:N}$ and $U^{'1:N}$ respectively, it is easy to verify that the achievable common message rate is
  \begin{align}
  R_1^{(1)}+R^{(c)}&\leq\min\{I(X_1;Y_1),I(X_1;Y_2)\}\nonumber\\
  &~~~~+\min\{I(U;Y_1|X_1),I(U;Y_2|X_2)\}.  \label{R-C-s}
  \end{align}
  Such a scheme achieves (\ref{R-C}) only in the following two cases:
  \begin{itemize}
  	\item (Case 1) $I(X_1;Y_1)\leq I(X_1;Y_2)$ and $I(U;Y_1|X_1)\leq I(U;Y_2|X_1)$,
  	\item (Case 2) $I(X_1;Y_1)\geq I(X_1;Y_2)$ and $I(U;Y_1|X_1)\geq I(U;Y_2|X_1)$.
  \end{itemize}
  In the other two cases of
  \begin{itemize}
  	\item (Case 3) $I(X_1;Y_1)< I(X_1;Y_2)$ and $I(U;Y_1|X_1)> I(U;Y_2|X_1)$,
  	\item (Case 4) $I(X_1;Y_1)> I(X_1;Y_2)$ and $I(U;Y_1|X_1)< I(U;Y_2|X_1)$,
  \end{itemize}
  the achievable rate in (\ref{R-C-s}) is strictly smaller than that in (\ref{R-C}). In these cases, the chaining scheme should be jointly designed for $U^{1:N}$ and $U^{'1:N}$, which we refer to as cross-transmitter chaining.

  \subsubsection{Case 1 and Case 2}
  Since Case 2 is similar to Case 1 by swapping the roles of two transmitters, we only describe the chaining scheme in Case 1 for brevity. From (\ref{CommonRate}) we know that given sufficiently large $N$, we always have $|\mathcal{I}_{1c}^{(1)}|\leq |\mathcal{I}_{1c}^{(2)}|$ and $|\mathcal{I}_{2c}^{(1)}|\leq |\mathcal{I}_{2c}^{(2)}|$. Define
  \begin{equation}
  \label{I-c}
  \begin{aligned}
  \mathcal{I}_{1c}^{(0)}=\mathcal{I}_{1c}^{(1)}\cap\mathcal{I}_{1c}^{(2)},~~\mathcal{I}_{1c}^{(1a)}=\mathcal{I}_{1c}^{(1)}\setminus\mathcal{I}_{1c}^{(2)},\\
  \mathcal{I}_{2c}^{(0)}=\mathcal{I}_{2c}^{(1)}\cap\mathcal{I}_{2c}^{(2)},~~\mathcal{I}_{2c}^{(1a)}=\mathcal{I}_{2c}^{(1)}\setminus\mathcal{I}_{2c}^{(2)}.
  \end{aligned}
  \end{equation} 
  Choose an arbitrary subset $\mathcal{I}_{1c}^{(2a)}$ of $\mathcal{I}_{1c}^{(2)}\setminus \mathcal{I}_{1c}^{(1)}$ such that $|\mathcal{I}_{1c}^{(2a)}|=|\mathcal{I}_{1c}^{(1a)}|$, and an arbitrary subset $\mathcal{I}_{2c}^{(2a)}$ of $\mathcal{I}_{2c}^{(2)}\setminus \mathcal{I}_{2c}^{(1)}$ such that $|\mathcal{I}_{2c}^{(2a)}|=|\mathcal{I}_{2c}^{(1a)}|$. The chaining scheme goes as follows.
  
  (I) In the 1st block, transmitter 1 encodes its common message as:
  \begin{itemize}
  	\item $\{u_1^{j}\}_{j\in \mathcal{I}_{1c}^{(0)}\cup\mathcal{I}_{1c}^{(1a)}}$ store message symbols from $M_1^{(1)}$,
  	\item $\{u_1^{j}\}_{j\in (\mathcal{I}_{1c}^{(0)}\cup\mathcal{I}_{1c}^{(1a)}\cup \mathcal{D}_{1c})^C}$ carry frozen symbols uniformly distributed over $\mathcal{X}_1$,
  	\item $\{u_1^{j}\}_{j\in \mathcal{D}_{1c}}$ are assigned by random mappings $\lambda_j(u_1^{1:j-1})$ that generate an output $u\in\mathcal{X}_1$ according to conditional probability $P_{U_1^{j}|U_1^{1:j-1}}(u|u_1^{1:j-1})$,
  \end{itemize}  
  and transmitter 2 encodes its common message as:
  \begin{itemize}
  	\item $\{u^{'j}\}_{j\in \mathcal{I}_{2c}^{(0)}\cup\mathcal{I}_{2c}^{(1a)}}$ store message symbols from $M_1^{(2)}$ and $M_2^{(c)}$,
  	\item $\{u^{'j}\}_{j\in (\mathcal{I}_{2c}^{(0)}\cup\mathcal{I}_{2c}^{(1a)}\cup \mathcal{D}_{2c})^C}$ carry frozen symbols uniformly distributed over $\mathcal{U}$,
  	\item $\{u^{'j}\}_{j\in \mathcal{D}_{2c}}$ are assigned by random mappings $\lambda_j(u^{'1:j-1})$ that generate an output $u\in\mathcal{U}$ according to conditional probability $P_{U^{'j}|U^{'1:j-1}}(u|u^{'1:j-1})$.
  \end{itemize}  
  
  (II) In the $i$th  ($1<i<m$) block, transmitter 1 assigns $\{u_1^{j}\}_{j\in \mathcal{I}_{1c}^{(2a)}}$ with the same value of $\{u_1^{j}\}_{j\in \mathcal{I}_{1c}^{(1a)}}$ in block $i-1$, and transmitter 2 assigns $\{u^{'j}\}_{j\in \mathcal{I}_{2c}^{(2a)}}$ with the same value of $\{u^{'j}\}_{j\in \mathcal{I}_{2c}^{(1a)}}$ in block $i-1$. The rest of $u_1^{1:N}$ and $u^{'1:N}$ are determined in the same way as in (I).
  
  (III) In the $m$th block, transmitter 1 assigns $\{u_1^{j}\}_{j\in \mathcal{I}_{1c}^{(1a)}}$ with frozen symbols uniformly distributed over $\mathcal{X}_1$, and transmitter 2 assigns $\{u^{'j}\}_{j\in \mathcal{I}_{2c}^{(1a)}}$ with frozen symbols uniformly distributed over $\mathcal{U}$. The rest of $u_1^{1:N}$ and $u^{'1:N}$ are determined in the same way as in (II).
  
  \begin{figure}[tb]
  	\centering
  	\includegraphics[width=9cm]{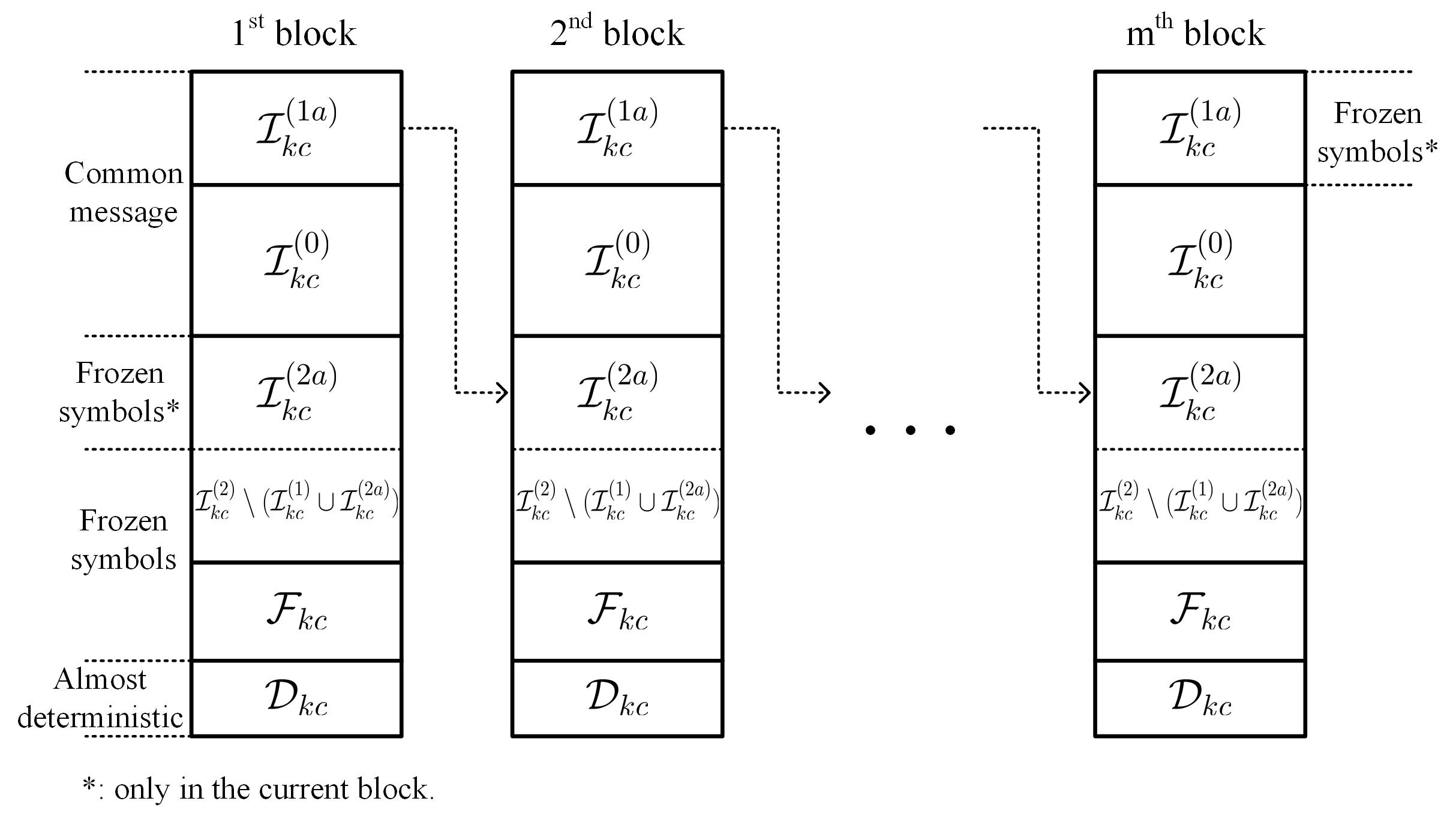}
  	\caption{The chaining scheme of transmitter $k$ ($k=1,2$) for common messages in Case 1.} \label{fig:chaining-c1}
  \end{figure}
  The chaining scheme in Case 1 is shown in Fig. \ref{fig:chaining-c1}.
  After each transmission block, transmitter 1 additionally sends a vanishing fraction of the almost deterministic symbols, $\{u_1^{j}\}_{j\in (\mathcal{H}_{X_1}^{(N)})^C\cap (\mathcal{L}_{X_1|Y_1}^{(N)})^C}$ and $\{u_1^{j}\}_{j\in (\mathcal{H}_{X_1}^{(N)})^C\cap (\mathcal{L}_{X_1|Y_2}^{(N)})^C}$, to receiver 1 and 2 respectively with some reliable error-correcting code. Similarly, transmitter 2 sends $\{u^{'j}\}_{j\in (\mathcal{H}_{U|X_1}^{(N)})^C\cap (\mathcal{L}_{U|Y_1X_1}^{(N)})^C}$ and $\{u^{'j}\}_{j\in (\mathcal{H}_{U|X_1}^{(N)})^C\cap (\mathcal{L}_{U|Y_2X_1}^{(N)})^C}$ to two receivers respectively after each block. From Section \ref{S:PolarPri} we know that the rate for transmitting these symbols vanishes as $N$ increases. Thus, the cost for these extra transmissions can be made negligible. Also note that frozen symbols at $\mathcal{F}_{kc}\cup\big{(} \mathcal{I}_{kc}^{(2)}\setminus(\mathcal{I}_{kc}^{(1)}\cup\mathcal{I}_{kc}^{(2a)})\big{)}$ ($k=1,2$) can be reused since they only need to be independently and uniformly distributed. Thus, the rate of frozen symbols which must be shared between transmitters and receivers can be made negligible as well by reusing them over sufficient number of blocks. In the analysis part in the next section we will assume that this part of frozen symbols are the same for all blocks in a frame.

  \subsubsection{Case 3 and Case 4}
  \label{S:Case34}

  Since Case 4 is similar to Case 3 by swapping the roles of two transmitters, we only describe the chaining scheme in Case 3 for brevity. In this case, given sufficiently large $N$, we always have $|\mathcal{I}_{1c}^{(1)}|\leq |\mathcal{I}_{1c}^{(2)}|$ and $|\mathcal{I}_{2c}^{(1)}|\geq |\mathcal{I}_{2c}^{(2)}|$. 
  
  If $\min\{I(U,X_1;Y_1),I(U,X_1;Y_2)\}=I(U,X_1;Y_1)$,
  which we refer to as Case 3-1, we have $|\mathcal{I}_{1c}^{(2)}|-|\mathcal{I}_{1c}^{(1)}|\geq |\mathcal{I}_{2c}^{(1)}|- |\mathcal{I}_{2c}^{(2)}|$ given sufficiently large $N$. In this case, define $\mathcal{I}_{1c}^{(0)}$, $\mathcal{I}_{2c}^{(0)}$, $\mathcal{I}_{1c}^{(1a)}$ and $\mathcal{I}_{2c}^{(1a)}$ in the same way as in (\ref{I-c}), and define
  \begin{equation}
  \mathcal{I}_{2c}^{(2a)}=\mathcal{I}_{2c}^{(2)}\setminus \mathcal{I}_{2c}^{(1)}.
  \end{equation}
  Choose an arbitrary subset $\mathcal{I}_{2c}^{(1b)}$ of $\mathcal{I}_{2c}^{(1a)}$ with $|\mathcal{I}_{2c}^{(1b)}|=|\mathcal{I}_{2c}^{(1)}|-|\mathcal{I}_{2c}^{(2)}|$, and an arbitrary subset $\mathcal{I}_{1c}^{(2a)}$ of $\mathcal{I}_{1c}^{(2)}\setminus \mathcal{I}_{1c}^{(1)}$ with $|\mathcal{I}_{1c}^{(2a)}|=|\mathcal{I}_{1c}^{(1a)}|+|\mathcal{I}_{2c}^{(1b)}|$. Let $\mathcal{I}_{1c}^{(2b)}$ be a subset of $\mathcal{I}_{1c}^{(2a)}$ with the same size as $\mathcal{I}_{2c}^{(1b)}$. The chaining scheme in Case 3-1 goes as follows and is illustrated in Fig. \ref{fig:chaining-c3}.
  \begin{figure}[tb]
  	\centering
  	\includegraphics[width=8.5cm]{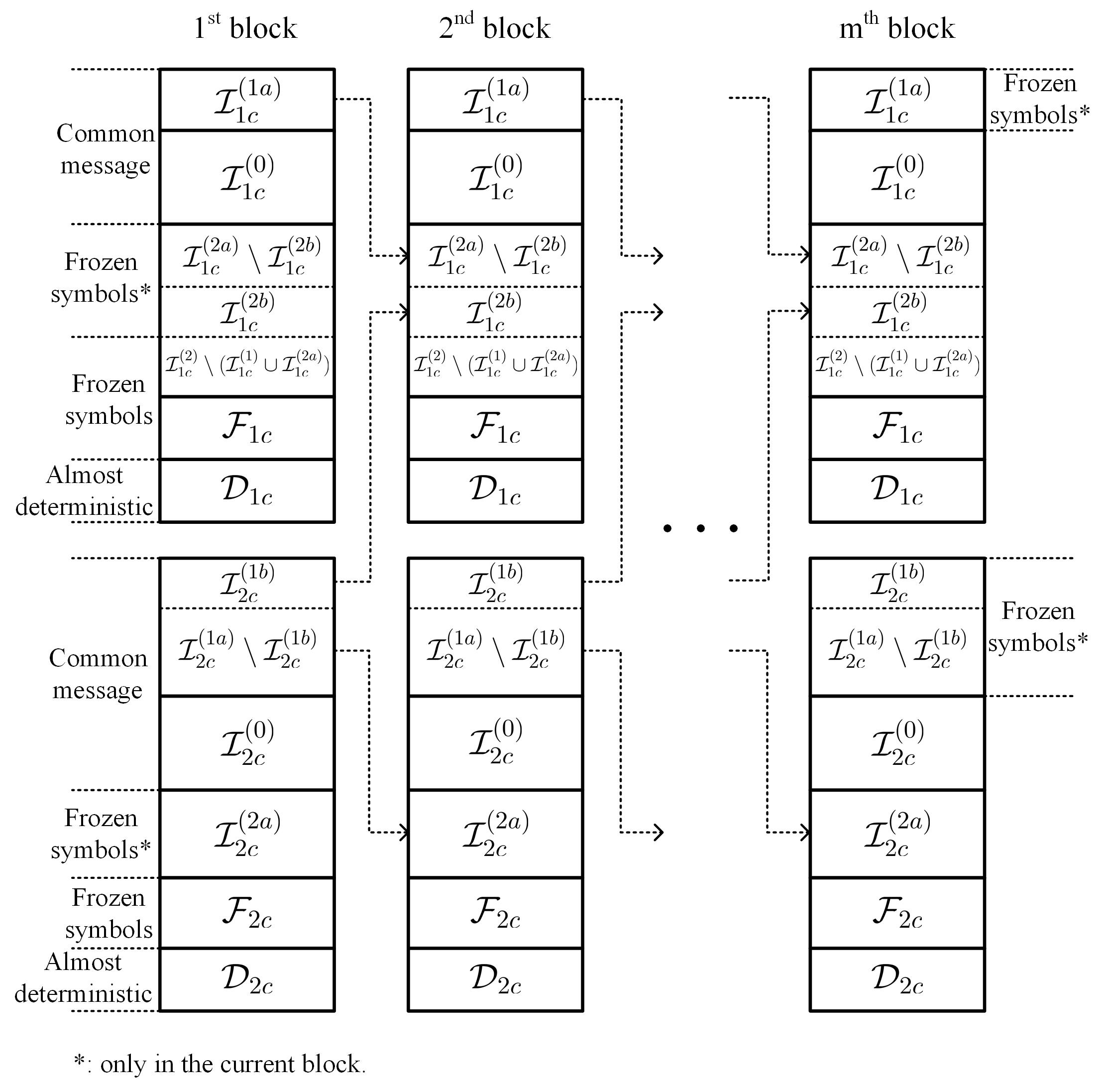}
  	\caption{The chaining scheme for common messages in Case 3-1.} \label{fig:chaining-c3}
  \end{figure}
    
  (I) In the 1st block, the encoding procedure is similar to Case 1, except that $\{u^{'j}\}_{j\in \mathcal{I}_{2c}^{(1b)}}$ of transmitter 2 are filled with message symbols from $M_1^{(1)}$ only, as they will be chained with transmitter 1's next encoding block.
  
  (II) In the $i$th  ($1<i<m$) block, transmitter 1 assigns $\{u_1^{j}\}_{j\in \mathcal{I}_{1c}^{(2a)}\setminus \mathcal{I}_{1c}^{(2b)}}$ with the same value of $\{u_1^{j}\}_{j\in \mathcal{I}_{1c}^{(1a)}}$ in block $i-1$, and $\{u_1^{j}\}_{j\in \mathcal{I}_{1c}^{(2b)}}$ with the same value of $\{u^{'j}\}_{j\in \mathcal{I}_{2c}^{(1b)}}$ in block $i-1$, while transmitter 2 assigns $\{u^{'j}\}_{j\in \mathcal{I}_{2c}^{(2a)}}$ with the same value of $\{u^{'j}\}_{j\in \mathcal{I}_{2c}^{(1a)}\setminus \mathcal{I}_{2c}^{(1b)}}$ in block $i-1$. The rest of $u_1^{1:N}$ and $u^{'1:N}$ are determined in the same way as in (I).
  
  (III) In the $m$th block, transmitter 1 assigns $\{u_1^{j}\}_{j\in \mathcal{I}_{1c}^{(1a)}}$ with frozen symbols uniformly distributed over $\mathcal{X}_1$, and transmitter 2 assigns $\{u^{'j}\}_{j\in \mathcal{I}_{2c}^{(1a)}}$ with frozen symbols uniformly distributed over $\mathcal{U}$. The rest of $u_1^{1:N}$ and $u^{'1:N}$ are determined in the same way as in (II).
  
  Note that the cross-transmitter chaining scheme described above does not violate the assumption that transmitter 1 is not cognitive, as the common message used for the cross-transmitter chaining only comes from $M_1$, of which both transmitters have non-causal knowledge.
  
  Otherwise if $\min\{I(U,X_1;Y_1),I(U,X_1;Y_2)\}=I(U,X_1;Y_2)$, 
  which we refer to as Case 3-2, we have $|\mathcal{I}_{1c}^{(2)}|-|\mathcal{I}_{1c}^{(1)}|\leq |\mathcal{I}_{2c}^{(1)}|- |\mathcal{I}_{2c}^{(2)}|$ given sufficiently large $N$. The chaining scheme in this case is similar to that in Fig. \ref{fig:chaining-c3} with the two transmitters exchanging their roles. 
  
  Similar to Case 1, two transmitters send part of their almost deterministic symbols to two receivers with some reliable error-correcting code after each block. Also, transmitter 1's frozen symbols at $\mathcal{F}_{1c}\cup\big{(} \mathcal{I}_{1c}^{(2)}\setminus(\mathcal{I}_{1c}^{(1)}\cup\mathcal{I}_{1c}^{(2a)})\big{)}$ and transmitter 2's frozen symbols at $\mathcal{F}_{2c}$ can be reused over different blocks.

  \subsection{Private and Confidential Message Encoding}
  \label{S:PCME}
  
  Since private message $M_2^{(p)}$ and confidential message $M_2^{(s)}$ are superimposed on $(U^{1:N},X_1^{1:N})$ by auxiliary random variable $V^{1:N}$, we treat $(U^{1:N},X_1^{1:N})$ as side information when applying polarization on $V^{1:N}$. Let $V^{'1:N}=V^{1:N}\mathbf{G}_N$. For $\delta_N=2^{-N^\beta}$ with $0<\beta<1/2$, define
  \begin{equation}
  \begin{aligned}
  \mathcal{H}^{(N)}_{V|X_1U}&\triangleq \big{\{}j\in[N]:H_{q_V}(V^{'j}|X_1^{1:N},U^{1:N},V^{'1:j-1})\geq 1-\delta_N \big{\}},\\
  \mathcal{H}^{(N)}_{V|Y_1X_1U}&\triangleq \big{\{}j\in[N]:H_{q_V}(V^{'j}|Y_1^{1:N},X_1^{1:N},U^{1:N},V^{'1:j-1})\\
  &~~~~~~~~~~~~~~~~~~~~~~~~~~~~~~~~~~~~~~~~~~~\geq 1-\delta_N \big{\}},\\
  \mathcal{L}^{(N)}_{V|Y_2X_1U}&\triangleq \big{\{}j\in[N]:H_{q_V}(V^{'j}|Y_2^{1:N},X_1^{1:N},U^{1:N},V^{'1:j-1})\leq \delta_N \big{\}}.
  \end{aligned}  
  \end{equation}
  Partition the indices of $V^{'1:N}$ as follows:
  \begin{equation}
  \begin{aligned}
  \label{Construct}
  \mathcal{I}_{2s}&=\mathcal{H}^{(N)}_{V|X_1U} \cap \mathcal{L}^{(N)}_{V|Y_2X_1U} \cap \mathcal{H}^{(N)}_{V|Y_1X_1U},\\
  \mathcal{I}_{2p}&=\mathcal{H}^{(N)}_{V|X_1U} \cap \mathcal{L}^{(N)}_{V|Y_2X_1U} \cap \big{(}\mathcal{H}^{(N)}_{V|Y_1X_1U}\big{)}^C,\\
  \mathcal{F}_2&=\mathcal{H}^{(N)}_{V|X_1U} \cap \big{(}\mathcal{L}^{(N)}_{V|Y_2X_1U}\big{)}^C \cap \mathcal{H}^{(N)}_{V|Y_1X_1U},\\
  \mathcal{R}_2&=\mathcal{H}^{(N)}_{V|X_1U} \cap \big{(}\mathcal{L}^{(N)}_{V|Y_2X_1U}\big{)}^C \cap \big{(}\mathcal{H}^{(N)}_{V|Y_1X_1U}\big{)}^C,\\
  \mathcal{D}_2&=\big{(}\mathcal{H}^{(N)}_{V|X_1U}\big{)}^C,
  \end{aligned}
  \end{equation}
  where $\mathcal{I}_{2s}$ is the reliable and secure set, $\mathcal{I}_{2p}$ is the reliable but insecure set, $\mathcal{R}_2$ is the unreliable and insecure set, $\mathcal{F}_2$ is the frozen set, and $\mathcal{D}_{2c}$ is the almost deterministic set.
  
  The aim of using the chaining method is to deal with the unreliable and insecure set $\mathcal{R}_2$. Consider the positive secrecy rate case (i.e., the right-hand-side of (\ref{ineq-4}) is positive). In this case, $|\mathcal{I}_{2s}|>|\mathcal{R}_2|$ always holds for sufficiently large $N$. Choose a subset $\mathcal{I}_{2s}^{(2)}$ of $\mathcal{I}_{2s}$ such that $|\mathcal{I}_{2s}^{(2)}|=|\mathcal{R}_2|$. Denote $\mathcal{I}_{2s}^{(1)}=\mathcal{I}_{2s}\setminus \mathcal{I}_{2s}^{(2)}$. The chaining scheme for transmitter 2's private and confidential messages is also designed in such a way that receiver 2 decodes from block $m$ to block 1, same as its decoding order for common messages. Details of the scheme are as follows and shown in Fig. \ref{fig:chaining-P}. 
  \begin{figure}[tb]
  	\centering
  	\includegraphics[width=9cm]{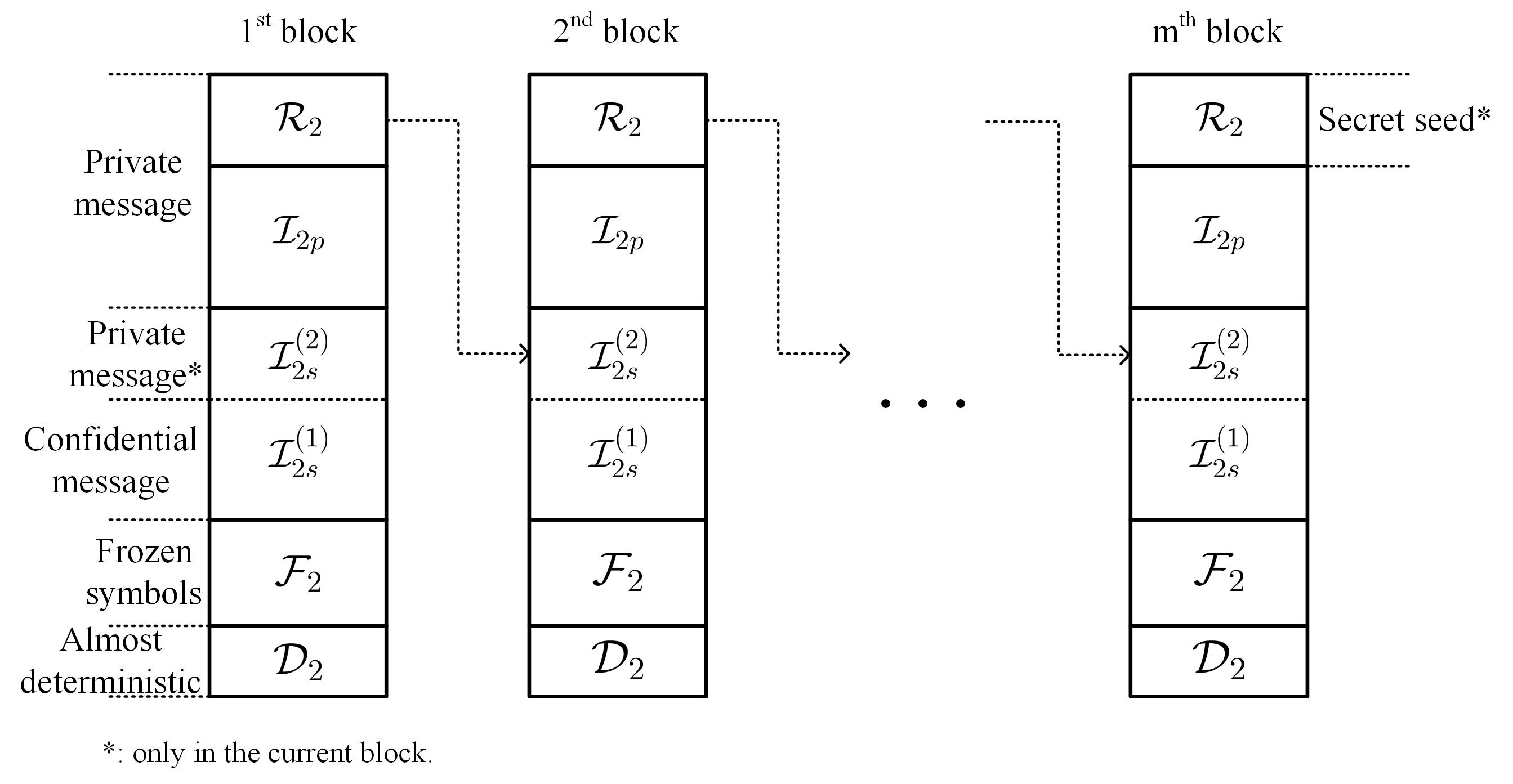}
  	\caption{The chaining scheme for transmitter 2's private and confidential messages.} \label{fig:chaining-P}
  \end{figure}
  
  (I) In the $1$st block, 
  \begin{itemize}
  	\item $\{v^{'j}\}_{j\in \mathcal{I}_{2s}^{(1)}}$ carry confidential message symbols,
  	\item $\{v^{'j}\}_{j\in \mathcal{I}_{2p}\cup\mathcal{I}_{2s}^{(2)}\cup\mathcal{R}_2}$ carry private message symbols,
  	\item $\{v^{'j}\}_{j\in \mathcal{F}_2}$ are filled with uniformly distributed frozen symbols,
  	\item $\{v^{'j}\}_{j\in \mathcal{D}_2}$ are assigned by random mappings $\lambda_j(v^{'1:j-1})$ that generate an output $v\in\mathcal{V}$ according to conditional probability $P_{V^{'j}|X_1^{1:N}U^{1:N}V^{'1:j-1}}$,
  \end{itemize}

  (II) In the $i$th $(1< i< m)$ block, $\{v^{'j}\}_{i\in \mathcal{I}_{2s}^{(2)}}$ are assigned with the same value as $\{v^{'j}\}_{i\in \mathcal{R}_2}$in the $(i-1)$th block, and the rest of $v^{'1:N}$ are determined in the same way as in (I).
  
  (III) In the $m$th block, $\{v^{'j}\}_{j\in \mathcal{R}_2}$ carry some uniformly distributed random symbols that are shared only between transmitter 2 and receiver 2 (known as secret seed), and the rest of $v^{'1:N}$ are determined in the same way as in (II).
  
  The secret seed rate can be made arbitrarily small by increasing the number of chained blocks in a frame. After each transmission block, transmitter 2 additionally sends a vanishing fraction of the almost deterministic symbols, $\{v^{'j}\}_{j\in (\mathcal{H}_{V|X_1U}^{(N)})^C\cap (\mathcal{L}_{V|Y_2X_1U}^{(N)})^C}$, to receiver 2 secretly with some reliable error-correcting code. Note that unlike in the common message encoding, the additional transmission for the almost deterministic symbols here must be kept secret from receiver 1. Nevertheless, the rate of this transmission can be made arbitrarily small by increasing $N$. Similar to the common message encoding, frozen symbols at $\mathcal{F}_{2}$ can also be reused over different blocks. In the next section we will show that with the reuse of frozen symbols, our proposed scheme still achieves strong secrecy.

  \subsection{Channel Prefixing}
  
  To generate the final codeword $X_2^{1:N}$ for transmitter 2, one can transmit $(X_1^{1:N},U^{1:N},V^{1:N})$ through a virtual channel with transition probability $P_{X_2|X_1UV}$. Also, one can consider $X_2$ and $(X_1,U,V)$ as correlated sources and apply polar source coding to obtain the final codeword. To design a scheme that requires the minimum generating rate of randomness, we take the latter approach in this paper. Let $U_2^{1:N}=X_2^{1:N}\mathbf{G}_N$. For $\delta_N=2^{-N^\beta}$ with $0<\beta<1/2$, define
  \begin{equation}
  \begin{aligned}
  \mathcal{H}^{(N)}_{X_2|X_1UV}&\triangleq \big{\{}j\in[N]:H_{q_{X_2}}(U_2^{j}|X_1^{1:N},U^{1:N},V^{1:N},U_2^{1:j-1})\\
  &~~~~~~~~~~~~~~~~~~~~~~~~~~~~~~~~~~~~~~~~~~~\geq 1-\delta_N \big{\}},\\
  \mathcal{H}^{(N)}_{X_2|Y_1X_1UV}&\triangleq \big{\{}j\in[N]:H_{q_{X_2}}(U_2^{j}|Y_1^{1:N},X_1^{1:N},U^{1:N},V^{1:N},\\
  &~~~~~~~~~~~~~~~~~~~~~~~~~~~~~~~~U_2^{1:j-1})\geq 1-\delta_N \big{\}},\\
  \mathcal{L}^{(N)}_{X_2|Y_1X_1UV}&\triangleq \big{\{}j\in[N]:H_{q_{X_2}}(U_2^{j}|Y_1^{1:N},X_1^{1:N},U^{1:N},V^{1:N},\\
  &~~~~~~~~~~~~~~~~~~~~~~~~~~~~~~~~~~~~U_2^{1:j-1})\leq \delta_N \big{\}}.
  \end{aligned}  
  \end{equation}
  Once $(X_1^{1:N},U^{1:N},V^{1:N})$ is determined, $X_2^{1:N}$ can be obtained as follows. Let $w_r$ be a random sequence uniformly distributed over $\mathcal{X}_2$ and of length $|\mathcal{H}^{(N)}_{X_2|Y_1X_1UV}|$,
  \begin{itemize}
  	\item $\{u_2^j\}_{j\in \mathcal{H}^{(N)}_{X_2|Y_1X_1UV}}=w_r$, 
  	\item $\{u_2^j\}_{j\in \mathcal{H}^{(N)}_{X_2|X_1UV}\setminus \mathcal{H}^{(N)}_{X_2|Y_1X_1UV}}$ are filled with random symbols uniformly distributed over $\mathcal{X}_2$,
  	\item $\{u_2^j\}_{j\in (\mathcal{H}^{(N)}_{X_2|X_1UV})^C}$ are assigned by random mappings $\lambda_j(u_2^{1:j-1})$ that generate an output $x\in\mathcal{X}_2$ according to conditional probability $P_{U_2^{j}|X_1^{1:N}U^{1:N}V^{1:N}U_2^{1:j-1}}$,
  	\item Compute $x_2^{1:N}=u_2^{1:N}\mathbf{G}_N$. 
  \end{itemize}

  An intuitive explanation for why random symbols in $\mathcal{H}^{(N)}_{X_2|Y_1X_1UV}$ can be reused but not those in $\mathcal{H}^{(N)}_{X_2|X_1UV}\setminus \mathcal{H}^{(N)}_{X_2|Y_1X_1UV}$ is that $\{u_2^j\}_{j\in \mathcal{H}^{(N)}_{X_2|Y_1X_1UV}}$ are very unreliable for receiver 1, thus reusing them does not harm security. We will show in the next section that with such a channel prefixing approach, our proposed scheme can achieve strong secrecy. 
  
  \subsection{Decoding}
  \subsubsection{Common Message Decoding}
  
  We first consider receiver 1, who decodes from block 1 to block $m$. Although we have considered different cases in Section \ref{S:CME}, the decoding procedure can be summarized in a unified form as follows:
  
    (I) In the 1st block, receiver 1 first decodes $\{u_1^{j}\}_{j\in \mathcal{I}_{1c}^{(0)}\cup\mathcal{I}_{1c}^{(1a)}}$ with a SCD and obtains an estimate of $\bar{u}_1^{1:N}$. Then it decodes $\{u^{'j}\}_{j\in \mathcal{I}_{2c}^{(0)}\cup\mathcal{I}_{2c}^{(1a)}}$ with a SCD, in which $\bar{u}_1^{1:N}$ is treated as side information, and obtains an estimate of $\bar{u}^{'1:N}$.

  	(II) In the $i$th $(1<i<m)$ block, $\{\bar{u}_1^{j}\}_{j\in \mathcal{I}_{1c}^{(2a)}}$ and $\{\bar{u}^{'j}\}_{j\in \mathcal{I}_{2c}^{(2a)}}$ are deduced from $\bar{u}_1^{1:N}$ and $\bar{u}^{'1:N}$ in block $i-1$ according to different cases (see Fig. \ref{fig:chaining-c1} and Fig. \ref{fig:chaining-c3}), and the rest are decoded in the same way as in (I).
  	
  	(III) In the $m$th block, $\{\bar{u}_1^{j}\}_{j\in \mathcal{I}_{1c}^{(1a)}}$ and $\{\bar{u}_1^{j}\}_{j\in \mathcal{I}_{2c}^{(1a)}}$ are decoded as frozen symbols, and the rest are decoded in the same way as in (II). 
    
  Receiver 2 decodes the common messages similarly, except that it decodes from block $m$ to block 1.
  
  \subsubsection{Private and Confidential Messages Decoding}
    
  Receiver 2 decodes the private and confidential messages from block $m$ to block 1 as follows.
  
  (I) In the $m$th block, receiver 1 first decodes $\{v^{'j}\}_{j\in \mathcal{I}_{2p}\cup\mathcal{I}_{2s}}$ with $\hat{u}_1^{1:N}$ and $\hat{u}^{'1:N}$ in the same block being treated as side information, where $\hat{u}_1^{1:N}$ and $\hat{u}^{'1:N}$ are its decoding result of common messages, and obtains an estimate of $\hat{v}^{'1:N}$.
  
  (II) In the $i$th $(1\leq i<m)$ block, $\{v^{'j}\}_{i\in \mathcal{R}_2}$ are deduced from $\{\hat{v}^{'j}\}_{i\in \mathcal{I}_{2s}^{(2)}}$ in block $i+1$, and the rest are decoded in the same way as in (I).

  \section{Performance Analysis}
  \label{S:Perf}
 
  \subsection{Error Performance}
  Let $\tilde{U}^{1:N}$, $\tilde{V}^{1:N}$, $\tilde{X}_1^{1:N}$, $\tilde{X}_2^{1:N}$, $\tilde{Y}_1^{1:N}$ and $\tilde{Y}_2^{1:N}$ be the vectors generated by our encoding scheme. The following lemma shows that the joint distribution induced by our encoding scheme is asymptotically indistinguishable from the target joint distribution (the one that our scheme is designed for).
  \newtheorem{lemma}{Lemma}
  \begin{lemma}
  	\label{lemma:0}
  	\begin{align}
  	\parallel &P_{U^{1:N}V^{1:N}X_1^{1:N}X_2^{1:N}Y_1^{1:N}Y_2^{1:N}}-P_{\tilde{U}^{1:N}\tilde{V}^{1:N}\tilde{X}_1^{1:N}\tilde{X}_2^{1:N}\tilde{Y}_1^{1:N}\tilde{Y}_2^{1:N}}\parallel \nonumber\\
  	&~~~~ \leq O(\sqrt{N}2^{-N^\beta/2}).\label{TVD}
  	\end{align}
  \end{lemma}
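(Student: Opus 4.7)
The plan is to telescope the joint distribution through the four encoding layers (first $X_1^{1:N}$, then $U^{1:N}$, then $V^{1:N}$, then $X_2^{1:N}$) followed by the memoryless physical channel, and to control the TVD contributed by each layer using the standard asymmetric polar source/channel coding KL estimate recalled in Section \ref{S:PolarPri}. Concretely, I would introduce a chain of intermediate distributions $\tilde{P}^{(0)}=P$, $\tilde{P}^{(1)}$, $\ldots$, $\tilde{P}^{(5)}=\tilde{P}$, where each $\tilde{P}^{(\ell)}$ is obtained from $\tilde{P}^{(\ell-1)}$ by replacing one more encoding layer with the encoder's actual construction. The triangle inequality then reduces the lemma to bounding each $\|\tilde{P}^{(\ell-1)}-\tilde{P}^{(\ell)}\|$ separately.

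For each encoding layer, the encoder deviates from ideal conditional sampling only on the high-entropy indices in the relevant $\mathcal{H}^{(N)}$, where it inserts values that are either message bits, frozen symbols, or symbols reused through the chaining scheme---all of which are uniformly distributed on the corresponding alphabet. The remaining indices in $\mathcal{D}$ are sampled from the correct conditional via the random mappings $\lambda_j$. Writing out the log-likelihood ratio termwise, the KL divergence between the two distributions collapses to
\begin{equation}
D\bigl(\tilde{P}^{(\ell-1)} \,\big\|\, \tilde{P}^{(\ell)}\bigr) = \sum_{j\in \mathcal{H}^{(N)}}\bigl[1-H_{q}(U^j|U^{1:j-1},\text{side info})\bigr]\log q,
\end{equation}
which is at most $N\delta_N\log q$ by the defining inequality of $\mathcal{H}^{(N)}$. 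Pinsker's inequality then yields $\|\tilde{P}^{(\ell-1)}-\tilde{P}^{(\ell)}\| = O(\sqrt{N}\,2^{-N^\beta/2})$. The channel layer contributes nothing extra because $P_{Y_1Y_2|X_1X_2}^{\otimes N}$ is applied identically under both $P$ and $\tilde{P}^{(4)}$, and TVD is non-increasing under common Markov kernels.

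Summing the per-layer bounds via the triangle inequality yields the claimed $O(\sqrt{N}\,2^{-N^\beta/2})$ rate, since the number of layers is a constant. The main subtlety I expect to be the central obstacle is verifying that the per-layer KL estimate remains valid in the presence of cross-block chaining and cross-transmitter chaining: the reused bits are inserted into high-entropy positions of the current block and are themselves uniformly distributed (being message bits or frozen symbols of an earlier block), so the induced per-block marginal on $\mathcal{H}^{(N)}$ is still uniform---exactly what the above KL calculation assumes. A parallel argument handles the reuse of $\{u_2^j\}_{j\in\mathcal{H}^{(N)}_{X_2|Y_1X_1UV}}$ in the channel-prefixing step. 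One must also check that the side information in the entropy definitions of $\mathcal{H}^{(N)}_{U|X_1}$, $\mathcal{H}^{(N)}_{V|X_1U}$, and $\mathcal{H}^{(N)}_{X_2|X_1UV}$ matches the conditioning used at the corresponding successive encoding layer, so that the $1-\delta_N$ entropy lower bound propagates through the cascade without loss.
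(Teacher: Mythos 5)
Your proposal is correct and follows essentially the same route as the paper's Appendix A: the paper likewise bounds the KL divergence contributed by each successive encoding layer (citing the analogue of your per-layer calculation from the broadcast-channel construction), converts to total variation via Pinsker, discards the common channel kernel by the invariance of TVD under identical conditionals, and sums the pieces with the triangle inequality. The chaining subtlety you flag is handled in the paper exactly as you anticipate, since the reused and message symbols are uniform on the high-entropy positions.
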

  \begin{proof}
  	See Appendix A.
  	\end{proof}

  Denote receiver 1's error probability when decoding message $M_1^{(1)}$ in block $i$ by $P_{e1,i}^{(1)}$, and that when decoding $(M_1^{(2)},M_2^{(c)})$ by $P_{e1,i}^{(2)}$. For $i=[2,m]$, define the following error events
  \begin{align*}
  \mathcal{E}_{X_1Y_1,i}&\triangleq \{(X_1^{1:N}Y_1^{1:N})\neq (\tilde{X}_1^{1:N}\tilde{Y}_1^{1:N})_i\}, \\
  \mathcal{E}_{X_1,i-1}^{ch}&\triangleq \{(\bar{U}_1^{chaining})_{i-1}\neq (\tilde{U}_1^{chaining})_{i-1}\}, \\
  \mathcal{E}_{U,i-1}^{ch}&\triangleq \{(\bar{U}^{'chaining})_{i-1}\neq (\tilde{U}^{'chaining})_{i-1}\}, \\
  \mathcal{E}_{X_1,i}&\triangleq \{(\bar{X}_1^{1:N})_{i}\neq (\tilde{X}_1^{1:N})_{i}\}, \\
  \mathcal{E}_i&\triangleq \mathcal{E}_{X_1Y_1,i}\cup \mathcal{E}_{X_1,i-1}^{ch}\cup \mathcal{E}_{U,i-1}^{ch},\\
  \mathcal{E}'_{i}&\triangleq \mathcal{E}_{X_1Y_1,i}\cup \mathcal{E}_{X_1,i-1}^{ch}\cup \mathcal{E}_{U,i-1}^{ch}\cup \mathcal{E}_{X_1,i},
  \end{align*}
  where $(\cdot)_i$ denotes vectors in block $i$, $\bar{U}$ denotes the decoding result of $U$, and "chaining" in the superscript stands for the elements used for chaining.
  Using optimal coupling \cite[Lemma 3.6]{aldous1983random} we have
  \begin{equation*}
  P[\mathcal{E}_{X_1Y_1,i}]=\parallel P_{X_1^{1:N}Y_1^{1:N}}-P_{\tilde{X}_1^{1:N}\tilde{Y}_1^{1:N}}\parallel.
  \end{equation*}
  Then we have
  \begin{align}
 P_{e1,i}^{(1)} &\leq P[(\bar{X}_1^{1:N})_i\neq (\tilde{X}_1^{1:N})_i] \nonumber\\
  &=P[(\bar{X}_1^{1:N})_i\neq (\tilde{X}_1^{1:N})_i|\mathcal{E}_i]P[\mathcal{E}_i]\nonumber\\
  &~~~~+P[(\bar{X}_1^{1:N})_i\neq (\tilde{X}_1^{1:N})_i|\mathcal{E}_i^C]P[\mathcal{E}_i^C] \nonumber\\
  &\leq P[\mathcal{E}_i]+P[(\bar{X}_1^{1:N})_i\neq (\tilde{X}_1^{1:N})_i|\mathcal{E}_i^C] \nonumber\\
  &\leq P(\mathcal{E}_{X_1Y_1,i})+ P(\mathcal{E}_{X_1,i-1}^{ch})+ P(\mathcal{E}_{U,i-1}^{ch})\nonumber\\
  &~~~~+P[(\bar{X}_1^{1:N})_i\neq (\tilde{X}_1^{1:N})_i|\mathcal{E}_i^C] \nonumber\\
  &\leq \sqrt{2\log 2}\sqrt{N\delta_N}(2+2\sqrt{3})+N\delta_N\nonumber\\
  &~~~~+P[(\bar{X}_1^{1:N})_{i-1}\neq (\tilde{X}_1^{1:N})_{i-1}]\nonumber\\
  &~~~~~~~~ +P[(\bar{U}^{1:N})_{i-1}\neq (\tilde{U}^{1:N})_{i-1}], \label{R1C-1}
  \end{align}
  where (\ref{R1C-1}) holds from (\ref{VD-4}) and the error probability of source polar coding \cite{arikan2010source}.
  
  Similarly we have
  \begin{align}
  P_{e1,i}^{(2)}&\leq P[(\bar{U}^{1:N})_i\neq (\tilde{U}^{1:N})_i] \nonumber\\
  &\leq P(\mathcal{E}_{X_1Y_1,i})+ P(\mathcal{E}_{X_1,i-1}^{ch})+ P(\mathcal{E}_{U,i-1}^{ch})+ P(\mathcal{E}_{X_1,i})\nonumber\\
  &~~~~+P[(\bar{U}^{1:N})_i\neq (\tilde{U}^{1:N})_i|\mathcal{E}_i^{'C}] \nonumber\\
  &\leq \delta_N^{c}+N\delta_N+P[(\bar{X}_1^{1:N})_{i-1}\neq (\tilde{X}_1^{1:N})_{i-1}]\nonumber\\
  &~~~~ +P[(\bar{U}^{1:N})_{i-1}\neq (\tilde{U}^{1:N})_{i-1}]+P[(\bar{X}_1^{1:N})_{i}\neq (\tilde{X}_1^{1:N})_{i}], \label{R1C-2}
  \end{align}
  where $\delta_N^{c}\triangleq \sqrt{2\log 2}\sqrt{N\delta_N}(2+2\sqrt{3})$. From (\ref{R1C-1}) and (\ref{R1C-2}) we have
  \begin{align*}
  &~~~~P[(\bar{X}_1^{1:N})_i\neq (\tilde{X}_1^{1:N})_i]+P[(\bar{U}^{1:N})_i\neq (\tilde{U}^{1:N})_i]\\
  &\leq 3\Big{(}\delta_N^{c}+N\delta_N+P[(\bar{X}_1^{1:N})_{i-1}\neq (\tilde{X}_1^{1:N})_{i-1}]\nonumber\\
  &~~~~~~~~+P[(\bar{U}^{1:N})_{i-1}\neq (\tilde{U}^{1:N})_{i-1}]\Big{)},
  \end{align*}
  By induction we have
  \begin{align}
  &~~~~P[(\bar{X}_1^{1:N})_i\neq (\tilde{X}_1^{1:N})_i]+P[(\bar{U}^{1:N})_i\neq (\tilde{U}^{1:N})_i]\nonumber\\
  &\leq \sum_{k=1}^{i-1}3^k(\delta_N^{c}+N\delta_N)+3^{i-1}P[(\bar{X}_1^{1:N})_1\neq (\tilde{X}_1^{1:N})_1]\nonumber\\
  &~~~~~~~~+3^{i-1}P[(\bar{U}^{1:N})_1\neq (\tilde{U}^{1:N})_1].\label{R1C-3}
  \end{align}
  From the above analysis and the assumption that receivers have perfect knowledge of frozen symbols we have
  \begin{align*}
  P[(\bar{X}_1^{1:N})_1\neq (\tilde{X}_1^{1:N})_1]+P[(\bar{U}^{1:N})_1\neq (\tilde{U}^{1:N})_1]\leq 3(\delta_N^{c}+N\delta_N).
  \end{align*}
  Thus, the overall error probability of receiver 1 in a frame can be upper bounded by
  \begin{align}
  P_{e1}&\leq \sum_{k=1}^{m}\big{(}P^{(1)}_{e1,k}+P^{(2)}_{e1,k}\big{)}\nonumber\\
  &\leq \sum_{i'=1}^{m}\sum_{k=1}^{i'} 3^k(\delta_N^{c}+N\delta_N) \nonumber\\  
  &=O(3^mN2^{-N^\beta}).
  \end{align}
  
  For receiver 2, the error probability of decoding common messages in a frame can be similarly upper bounded by
  \begin{equation}
  P_{e2}^{(c)}\leq O(3^mN2^{-N^\beta}).
  \end{equation}
  To estimate receiver 2's error probability in decoding its private and confidential messages block $i$, $P_{e2,i}^{(p,s)}$, we define the following error events:
  \begin{align*}
  \mathcal{E}_{UVX_1Y_2,i}&\triangleq \{(U^{1:N}X_1^{1:N}V^{1:N}Y_2^{1:N})\neq (\tilde{U}^{1:N}\tilde{X}_1^{1:N}\tilde{V}^{1:N}\tilde{Y}_2^{1:N})_i\},\\
  \mathcal{E}_{U,i}&\triangleq \{(\bar{U}^{1:N})_i\neq (\tilde{U}^{1:N})_i\},\\
  \mathcal{E}_{X_1,i}&\triangleq \{(\bar{X}_1^{1:N})_i\neq (\tilde{X}_1^{1:N})_i\},\\
  \mathcal{E}_{V,i+1}&\triangleq \{(\bar{V}^{1:N})_{i+1}\neq (\tilde{V}^{1:N})_{i+1}\},\\
  \mathcal{E}_i&\triangleq \mathcal{E}_{UVX_1Y_2,i}\cup \mathcal{E}_{U,i}\cup \mathcal{E}_{X_1,i}\cup \mathcal{E}_{V,i+1}.
  \end{align*}
  Using optimal coupling \cite[Lemma 3.6]{aldous1983random} we have
  \begin{equation*}
  P[\mathcal{E}_{UVX_1Y_2,i}]=\parallel P_{U^{1:N}V^{1:N}X_1^{1:N}Y_1^{1:N}}-P_{\tilde{U}^{1:N}\tilde{V}^{1:N}\tilde{X}_1^{1:N}\tilde{Y}_1^{1:N}}\parallel.
  \end{equation*}
  
 Similar to the analysis for common message decoding, $P_{e2,i}^{(p,s)}$ can be upper bounded by
  \begin{align*}
  P_{e2,i}^{(p,s)}&\leq P[\mathcal{E}_{V,i}]\\
  &\leq \delta_N^{c}+N\delta_N+P[\mathcal{E}_{U,i}]+P[ \mathcal{E}_{X_1,i}]+P[\mathcal{E}_{V,i+1}]\\
  &\leq \sum_{k=i}^{m} (3^{m-k}+1)(\delta_N^{c}+N\delta_N)+P[\mathcal{E}_{V,i+1}],
  \end{align*}
  where $\delta_N^{c}\triangleq \sqrt{2\log 2}\sqrt{N\delta_N}(2+2\sqrt{3})$. By induction and (\ref{R1C-3}) we have
  \begin{align}
  P_{e2,i}^{(p,s)}  &\leq \sum_{i'=i}^{m}\sum_{k=i'}^{m} (3^{m-k}+1)(\delta_N^{c}+N\delta_N) ).
  \end{align}
  Then
  \begin{align}
  P_{e2}^{(p,s)}\leq \sum_{k=1}^{m}P_{e2,k}^{(p,s)}=O(3^mN2^{-N^\beta}).
  \end{align}
  
  \subsection{Secrecy}
  \label{S:Secrecy}
  
  We first introduce some notations used in this subsection. In the $i$th $(1\leq i \leq m)$ block, the outputs of Enc 1, 2a and 2b (see Fig. \ref{fig:encoding}) are denoted by $\mathbf{X}_{1,i}$, $\mathbf{U}_i$ and $\mathbf{V}_i$, respectively. Transmitter 2's confidential message at $\mathcal{I}_{2s}^{(1)}$ is denoted by $M_i$, and private message at $\mathcal{I}_{2s}^{(2)}$ (which is used for chaining) by $E_i$. Receiver 1's channel output is denoted by $\mathbf{Y}_{1,i}$. The additionally transmitted almost deterministic symbols in $U_1^{1:N}$ and $U^{'1:N}$ are denoted by $D_{1c,i}$ and $D_{2c,i}$, respectively. The reused frozen symbols in $U_1^{1:N}$, $U^{'1:N}$ and $V^{'1:N}$ are denoted by $F_{1c}$, $F_{2c}$ and $F_{2p}$, respectively. The non-reused frozen symbols (see Fig. \ref{fig:chaining-c1} and \ref{fig:chaining-c3}) in $U_1^{1:N}$ in the 1st and $m$th blocks are denoted by $F_{11}$ and $F_{1m}$ respectively, and those in $U^{'1:N}$ by $F_{21}$ and $F_{2m}$ respectively. The reused randomness at $\mathcal{H}^{(N)}_{X_2|Y_1X_1UV}$ in the channel prefixing scheme is denoted by $W$. For brevity, denote $F\triangleq \{F_{1c},F_{2c},F_{11},F_{1m},F_{21},F_{2m},F_{2p}\}$, $D_i\triangleq \{D_{1c,i},D_{2c,i}\}$, $M^{i:m}\triangleq \{M_i,...,M_m\}$, etc.
  
  \begin{lemma}
  	\label{lemma:}
  	For any $i\in [m]$, we have
  	\begin{equation}
  	I(M_i,E_i;\mathbf{Y}_{1,i},D_i,F)\leq O(N^3 2^{-N^{\beta}}).
  	\end{equation}
  \end{lemma}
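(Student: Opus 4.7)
The plan is to upper-bound the leakage by (a) enlarging the adversary's secret view to the full reliable-and-secure set $\mathcal{I}_{2s}$ via data processing, (b) passing from the induced distribution $\tilde P$ to the product target distribution using Lemma~\ref{lemma:0} together with the standard TVD-to-entropy inequality, and (c) exploiting the defining inclusion $\mathcal{I}_{2s}\subseteq\mathcal{H}^{(N)}_{V|Y_1X_1U}$ to obtain a chain-rule lower bound on the conditional entropy of $V^{'\mathcal{I}_{2s}}$ given receiver~1's view.

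First, I would note that under $\tilde P$ the positions in $\mathcal{I}_{2s}=\mathcal{I}_{2s}^{(1)}\cup\mathcal{I}_{2s}^{(2)}$ are loaded with $(M_i,E_i)$, which is uniform since the chaining slots trace back to uniform messages and the secret seed placed in block~$m$; hence $\tilde H(\tilde V_i^{'\mathcal{I}_{2s}})=|\mathcal{I}_{2s}|\log q_V$ and data processing gives
$$I(M_i,E_i;\mathbf{Y}_{1,i},D_i,F)\le I(\tilde V_i^{'\mathcal{I}_{2s}};\mathbf{Y}_{1,i},D_i,F)=|\mathcal{I}_{2s}|\log q_V-\tilde H(\tilde V_i^{'\mathcal{I}_{2s}}\mid \mathbf{Y}_{1,i},D_i,F).$$
Next I would use Lemma~\ref{lemma:0} together with the Csisz\'ar--K\"orner inequality $|H_P(A\mid B)-H_{\tilde P}(A\mid B)|\le\|P-\tilde P\|_{TV}\log(|\mathcal A\times\mathcal B|/\|P-\tilde P\|_{TV})$ to replace $\tilde H$ by $H_P$ at a cost of order $O(N^{3/2}2^{-N^\beta/2})$. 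Under the product target the encoding block is i.i.d., and the chain rule together with the definition of $\mathcal H^{(N)}_{V|Y_1X_1U}$ yields
$$H_P(V^{'\mathcal{I}_{2s}}\mid Y_1^{1:N},X_1^{1:N},U^{1:N})=\sum_{j\in\mathcal{I}_{2s}}H_P(V^{'j}\mid Y_1^{1:N},X_1^{1:N},U^{1:N},V^{'1:j-1})\ge(1-\delta_N)|\mathcal{I}_{2s}|\log q_V.$$
Since $(D_i,F_{\text{block }i})$ is a deterministic function of $(U_{1,i}^{1:N},U_i^{1:N})$ and the remaining portion of $F$ is approximately independent of block-$i$ variables under the target, conditioning on $(D_i,F)$ in addition to $\mathbf{Y}_{1,i}$ is dominated by conditioning on $(Y_1^{1:N},X_1^{1:N},U^{1:N})$, so the lower bound survives. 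Combining, the polarization contribution is $O(N\delta_N)=O(N\,2^{-N^\beta})$, and adding the TVD correction with a slight rescaling of $\beta$ produces the claimed $O(N^3\,2^{-N^\beta})$ bound.

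The hard part is controlling the shared frozen block $F$, which is reused across all $m$ encoding blocks and thereby couples block~$i$'s transcript to every other block's encoding. I would handle this by observing that under the target the frozen positions are uniform and independent across blocks and, by Lemma~\ref{lemma:0}, approximately so under $\tilde P$; hence decoupling $F$ from the other blocks costs only a single TVD term rather than one per block. A secondary but easier point is that the extra almost-deterministic symbols $D_i$ have size $O(N\delta_N)$ by the polarization rate bound of \cite{arikan2009channel}, so their contribution $H(D_i)\le|D_i|\log q$ is of the same order and is absorbed into the final estimate.
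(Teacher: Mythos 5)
Your overall strategy---identifying $(M_i,E_i)$ with the symbols $\tilde V^{'\mathcal{I}_{2s}}$ of block $i$, passing from the induced to the target distribution via Lemma~\ref{lemma:0} and the entropy-continuity inequality, and invoking $\mathcal{I}_{2s}\subseteq\mathcal{H}^{(N)}_{V|Y_1X_1U}$ to lower-bound the conditional entropy given receiver 1's view---is the same as the paper's. The gap is in your treatment of the reused frozen symbols $F_{2p}$, which live at the positions $\mathcal{F}_2$ of $V^{'1:N}$ and are part of the conditioning set $F$. You assert that $(D_i,F_{\text{block }i})$ is a deterministic function of $(U_{1,i}^{1:N},U_i^{1:N})$, so that conditioning on $(D_i,F)$ is dominated by conditioning on $(Y_1^{1:N},X_1^{1:N},U^{1:N})$. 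That is true for $D_i$ and for the frozen symbols sitting in $U_1^{1:N}$ and $U^{'1:N}$, but it is false for $F_{2p}$: those symbols are part of $V^{'1:N}$ itself, drawn independently of $(X_1^{1:N},U^{1:N})$, so they are neither a function of the enlarged view $(Y_1^{1:N},X_1^{1:N},U^{1:N})$ nor a ``remaining portion of $F$'' that decouples from block $i$. A priori, revealing $V^{'\mathcal{F}_2}$ could substantially reduce the conditional entropy of $V^{'\mathcal{I}_{2s}}$ given $(Y_1^{1:N},X_1^{1:N},U^{1:N})$, so your claim that ``the lower bound survives'' is unsupported at exactly the delicate point.

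The repair---and the step the paper's proof in Appendix~B is built around---is to move $F_{2p}$ to the other side of the conditioning: write $-H_{q_V}(M_i,E_i\mid\mathbf{Y}_{1,i},D_i,F)=-H_{q_V}(M_i,E_i,F_{2p}\mid\mathbf{Y}_{1,i},D_i,F_c)+H_{q_V}(F_{2p}\mid\mathbf{Y}_{1,i},D_i,F_c)$, bound the second term by $|\mathcal{F}_2|$, and lower-bound the \emph{joint} conditional entropy of the symbols on $\mathcal{I}_{2s}\cup\mathcal{F}_2$ by $(1-\delta_N)\big(|\mathcal{I}_{2s}|+|\mathcal{F}_2|\big)$ minus the distribution-mismatch correction; this works precisely because $\mathcal{F}_2$, like $\mathcal{I}_{2s}$, is contained in $\mathcal{H}^{(N)}_{V|Y_1X_1U}$ by (\ref{Construct}), so the $|\mathcal{F}_2|$ contributions cancel up to $O(N\delta_N)$. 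Without this manipulation (or an equivalent argument that revealing $V^{'\mathcal{F}_2}$ leaks at most $O(N\delta_N)$ about $V^{'\mathcal{I}_{2s}}$ to receiver 1), the proof is incomplete. A second, smaller slip: your displayed chain rule conditions each term on the full prefix $V^{'1:j-1}$ rather than on $V^{'\{c_1,\dots,c_{j-1}\}}$; as written it is an inequality obtained by adding conditioning, not an equality, though the direction you need is preserved.
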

  \begin{proof}
  	See Appendix B.
  	
  \end{proof}
  
  \begin{lemma}
  	\label{lemma:2}
  	For any $i\in [1,m-1]$,
  	\begin{align}
  	&I(W;\mathbf{Y}_1^{i:m},D^{i:m},F|M^{i:m},E_i)\nonumber\\
  	&-I(E_{i+1},W;\mathbf{Y}_1^{i+1:m},D^{i+1:m},F|M^{i+1:m})\leq O(N^3 2^{-N^{\beta}}).
  	\end{align}
  \end{lemma}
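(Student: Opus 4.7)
I plan to upper-bound the left-hand side by introducing $E_{i+1}$ into the first argument of the mutual information and then apply the chain rule to isolate the block-$i$ contribution from the blocks-$(i+1):m$ contribution. The key structural observation is that $E_{i+1}$ sits at $\mathcal{R}_2$ of block $i$ (by chaining) and at $\mathcal{I}_{2s}^{(2)}$ of block $i+1$, so it is precisely the variable that couples the two adjacent halves of the frame. Concretely, using $I(W;\cdot\mid\cdot)\leq I(W,E_{i+1};\cdot\mid\cdot)$ and the chain rule I would write
\begin{align*}
&I(W;\mathbf{Y}_1^{i:m},D^{i:m},F\mid M^{i:m},E_i)\\
&\quad\leq I(W,E_{i+1};\mathbf{Y}_1^{i+1:m},D^{i+1:m},F\mid M^{i:m},E_i)\\
&\quad\quad + I(W,E_{i+1};\mathbf{Y}_{1,i},D_i\mid M^{i:m},E_i,\mathbf{Y}_1^{i+1:m},D^{i+1:m},F).
\end{align*}
Because the confidential symbol $M_i$ and the chained symbol $E_i$ are drawn independently of $(M^{i+1:m},W,E_{i+1})$ and of everything generated in blocks $i+1,\ldots,m$, the first term on the right-hand side collapses to $I(W,E_{i+1};\mathbf{Y}_1^{i+1:m},D^{i+1:m},F\mid M^{i+1:m})$, which is exactly the subtracted term in the lemma statement. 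It therefore suffices to bound the residual second term by $O(N^3 2^{-N^{\beta}})$.

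\textbf{Bounding the residual.} For the residual term, my plan is to move the later-block outputs from the conditioning side to the observation side using $I(X;Y\mid Z)\leq I(X;Y,Z)$ and then mimic the polarization-based argument that proves Lemma~\ref{lemma:} in Appendix~B. Using Lemma~\ref{lemma:0} together with Pinsker's inequality, the induced joint distribution can be replaced by the target product distribution at a cost that is $O(\sqrt{N}2^{-N^{\beta}/2})$ in total variation and hence polynomially small in entropy deficit. Under the target distribution, $W$ lies at $\mathcal{H}^{(N)}_{X_2|Y_1X_1UV}$, each coordinate of which has conditional entropy $\geq 1-\delta_N$ given receiver 1's observation and the past coordinates, and $E_{i+1}$ lies at $\mathcal{R}_2\subset\mathcal{H}^{(N)}_{V|X_1U}$ of block $i$, which admits an analogous coordinate-by-coordinate entropy estimate. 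Summing these per-coordinate deficits over at most $N$ indices, together with the polynomial prefactor coming from the conversion of total-variation distance into entropy, yields the claimed $O(N^3 2^{-N^{\beta}})$ bound.

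\textbf{Main difficulty.} The subtlest point is that a direct application of Lemma~\ref{lemma:} controls only the leakage of $(M_i,E_i)$, not of $(W,E_{i+1})$, and the conditioning set of the residual term involves outputs of all later blocks that themselves depend on $W$ (reused in the channel prefixing) and on $E_{i+1}$ (reappearing at $\mathcal{I}_{2s}^{(2)}$ of block $i+1$). The main obstacle is therefore to extend the Appendix~B argument so that it still works when the ``secret'' is the pair $(W,E_{i+1})$ and when the side information includes the shared frozen symbols $F$ and all later-block channel outputs. The indices reserved for $W$ and for $E_{i+1}$ in block $i$ are designed to satisfy against receiver 1 exactly the same polarization inequalities that underlie the proof of Lemma~\ref{lemma:}, so the analytical template should carry over essentially unchanged; the care needed is in checking that the cross-block conditioning does not spoil the per-coordinate entropy estimates, which amounts to bookkeeping that the reused randomness $W$ and the shared frozen symbols $F$ are treated consistently across all blocks of the frame.
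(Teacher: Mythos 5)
Your opening decomposition is algebraically sound: the monotonicity step $I(W;\cdot\mid\cdot)\leq I(W,E_{i+1};\cdot\mid\cdot)$, the chain rule, and the collapse of $I(W,E_{i+1};\mathbf{Y}_1^{i+1:m},D^{i+1:m},F\mid M^{i:m},E_i)$ to the subtracted term (by independence of $(M_i,E_i)$ from everything generated in later blocks) are all valid. The proof breaks at the claim that the residual $I(W,E_{i+1};\mathbf{Y}_{1,i},D_i\mid M^{i:m},E_i,\mathbf{Y}_1^{i+1:m},D^{i+1:m},F)$ is $O(N^3 2^{-N^{\beta}})$. By the chaining construction, $E_{i+1}$ is the content of $\mathcal{R}_2$ in block $i$, and $\mathcal{R}_2=\mathcal{H}^{(N)}_{V|X_1U}\cap\big(\mathcal{L}^{(N)}_{V|Y_2X_1U}\big)^C\cap\big(\mathcal{H}^{(N)}_{V|Y_1X_1U}\big)^C$ is by definition the \emph{insecure} set: its indices lie outside $\mathcal{H}^{(N)}_{V|Y_1X_1U}$, so the per-coordinate estimate $H_{q_V}(V^{'j}\mid Y_1^{1:N},X_1^{1:N},U^{1:N},V^{'1:j-1})\geq 1-\delta_N$ that drives the Appendix B argument fails on exactly these positions. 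Your justification cites $\mathcal{R}_2\subset\mathcal{H}^{(N)}_{V|X_1U}$, but that set guarantees high entropy given the \emph{encoder's} side information, not given receiver 1's observation; against $\mathbf{Y}_{1,i}$ these symbols are substantially (for the sub-block in $\mathcal{L}^{(N)}_{V|Y_1X_1U}$, almost completely) revealed, and conditioning on the later blocks does not help since there $E_{i+1}$ sits at the secure positions $\mathcal{I}_{2s}^{(2)}$. Hence the residual is $\Theta(N)$ in general and your chain of inequalities only yields a trivial bound. The auxiliary step $I(X;Y\mid Z)\leq I(X;Y,Z)$ makes things worse, since it moves all later-block outputs (in which $W$ is reused and $E_{i+1}$ reappears) into the observation, producing a multi-frame quantity that no single-block polarization estimate can control.

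The structural point you are missing is that $E_{i+1}$ must never occupy the ``secret'' slot of a mutual information whose observation contains $\mathbf{Y}_{1,i}$. The paper therefore peels off block $i$ \emph{first}: using $W\perp(M^{i:m},E_i)$ to pass to unconditional informations, it writes the first term as $I(W;\mathbf{Y}_{1,i},D_i,F,M_i,E_i)$ plus $I(W;\mathbf{Y}_1^{i+1:m},D^{i+1:m},M^{i+1:m}\mid\mathbf{Y}_{1,i},D_i,F,M_i,E_i)$; the block-$i$ term involves only $W$ as secret, whose indices do lie in $\mathcal{H}^{(N)}_{X_2|Y_1X_1UV}$ and hence admit the Appendix B treatment, while the cross-block term is expanded via $I(W;B\mid A)=I(W;B)+I(A;B\mid W)-I(A;B)$ and handled by the two auxiliary lemmas of Appendix C. In that argument the role of $E_{i+1}$ in the subtracted term is to absorb the inter-block correlation $I(A;B\mid W)-I(A;B)$ (using that block $i$ and blocks $i+1{:}m$ are conditionally independent given $(W,F,E_{i+1})$, together with the smallness of $I(E_{i+1};W\mid\mathbf{Y}_1^{i+1:m},D^{i+1:m},M^{i+1:m})$), not to be shielded from block $i$'s output. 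Your proof would need to be restructured along these lines.
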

  \begin{proof}
  	See Appendix C.
  \end{proof}

  \begin{lemma}
  	\label{lemma:3}
  	For any $i\in [1,m-1]$, let
  	\begin{equation}
  	L_i=I(M^{i:m},E_i,W;\mathbf{Y}_1^{i:m},D^{i:m},F).
  	\end{equation}
  	Then we have
  	\begin{equation}
  	L_i-L_{i+1}\leq O(N^3 2^{-N^{\beta}}).
  	\end{equation}
  \end{lemma}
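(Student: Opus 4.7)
The plan is to decompose $L_i$ and $L_{i+1}$ by the chain rule so that Lemma~3 absorbs the cross-block leakage of $W$ while Lemma~2 absorbs the per-block leakage of $(M_i,E_i)$. I first split $W$ from the rest on the left of each mutual information:
\begin{align*}
L_i &= I(M^{i:m},E_i;\mathbf{Y}_1^{i:m},D^{i:m},F) + I(W;\mathbf{Y}_1^{i:m},D^{i:m},F\,|\,M^{i:m},E_i),\\
L_{i+1} &= I(M^{i+1:m},E_{i+1};\mathbf{Y}_1^{i+1:m},D^{i+1:m},F) \\
&\quad + I(W;\mathbf{Y}_1^{i+1:m},D^{i+1:m},F\,|\,M^{i+1:m},E_{i+1}).
\end{align*}
Lemma~3 bounds the $W$-conditional term of $L_i$ by $I(E_{i+1},W;\mathbf{Y}_1^{i+1:m},D^{i+1:m},F\,|\,M^{i+1:m}) + O(N^3 2^{-N^\beta})$. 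Chain-ruling this right-hand side as $I(W;\,\cdot\,|\,M^{i+1:m},E_{i+1}) + I(E_{i+1};\,\cdot\,|\,M^{i+1:m})$, the first piece cancels the $W$-conditional term of $L_{i+1}$, and the leftover $I(E_{i+1};\,\cdot\,|\,M^{i+1:m})$ combines with the $(M^{i+1:m},E_{i+1})$-part of $L_{i+1}$ via $I(M^{i+1:m},E_{i+1};\,\cdot)=I(M^{i+1:m};\,\cdot)+I(E_{i+1};\,\cdot\,|\,M^{i+1:m})$, yielding
\begin{align*}
L_i - L_{i+1} &\leq I(M^{i:m},E_i;\mathbf{Y}_1^{i:m},D^{i:m},F) \\
&\quad - I(M^{i+1:m};\mathbf{Y}_1^{i+1:m},D^{i+1:m},F) + O(N^3 2^{-N^\beta}).
\end{align*}

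Second, I expand the leading difference by a further chain rule, splitting $M^{i:m}=(M_i,M^{i+1:m})$ and separating block~$i$ from blocks~$i+1{:}m$ in the observations:
\begin{align*}
&I(M^{i:m},E_i;\mathbf{Y}_1^{i:m},D^{i:m},F) = I(M^{i+1:m};\mathbf{Y}_1^{i+1:m},D^{i+1:m},F) \\
&\qquad + I(M^{i+1:m};\mathbf{Y}_{1,i},D_i\,|\,\mathbf{Y}_1^{i+1:m},D^{i+1:m},F) + I(M_i,E_i;\mathbf{Y}_1^{i:m},D^{i:m},F\,|\,M^{i+1:m}).
\end{align*}
The first summand cancels the subtracted $I(M^{i+1:m};\mathbf{Y}_1^{i+1:m},D^{i+1:m},F)$, so it suffices to bound the two residual mutual informations.

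Both residuals collapse to per-block leakage via the independence structure of the code. Since $(M_i,E_i)\perp M^{i+1:m}$ and block~$i$ shares with blocks~$i+1{:}m$ only the independently drawn chaining symbol $R_i=E_{i+1}$, the term $I(M_i,E_i;\mathbf{Y}_1^{i:m},D^{i:m},F\,|\,M^{i+1:m})$ reduces, up to a negligible correction, to the single-block leakage $I(M_i,E_i;\mathbf{Y}_{1,i},D_i,F)$, which is $O(N^3 2^{-N^\beta})$ by Lemma~2. Symmetrically, block~$i$'s output depends on $M^{i+1:m}$ only through $R_i$; using the enlargement $I(A;B\,|\,C)\leq I(A,E_{i+1};B\,|\,C)$ together with the target-distribution Markov relation $(\mathbf{Y}_{1,i},D_i)\perp M^{i+1:m}\,|\,E_{i+1},\mathbf{Y}_1^{i+1:m},D^{i+1:m},F$, the other residual is dominated by $I(E_{i+1};\mathbf{Y}_{1,i+1},D_{i+1},F)$, again $O(N^3 2^{-N^\beta})$ by Lemma~2 applied to block~$i+1$.

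The main obstacle is making the ``up to a negligible correction'' phrases in the previous paragraph rigorous in the \emph{induced} distribution, because $R_i=E_{i+1}$ appears in both block~$i$ and block~$i+1$, so the Markov factorizations that kill the residual terms are exact only under the target distribution. The remedy mirrors the technique used in Appendices~B and~C: combine Lemma~1's total-variation estimate with the careful conditioning arguments of those appendices to swap between induced and target distributions, keeping every residual within the $O(N^3 2^{-N^\beta})$ budget. Adding all contributions then yields $L_i - L_{i+1} \leq O(N^3 2^{-N^\beta})$ as claimed.
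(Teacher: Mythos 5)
Your decomposition is the same as the paper's: both reduce $L_i-L_{i+1}$ by the chain rule to the per-block leakage $I(M_i,E_i;\mathbf{Y}_{1,i},D_i,F)$ (handled by Lemma~2), the difference $I(W;\mathbf{Y}_1^{i:m},D^{i:m},F|M^{i:m},E_i)-I(E_{i+1},W;\mathbf{Y}_1^{i+1:m},D^{i+1:m},F|M^{i+1:m})$ (handled by Lemma~3), and the two cross-block conditional mutual informations $I(M^{i+1:m};\mathbf{Y}_{1,i},D_i|\mathbf{Y}_1^{i+1:m},D^{i+1:m},F)$ and $I(M_i,E_i;\mathbf{Y}_1^{i+1:m},D^{i+1:m}|M^{i+1:m},\mathbf{Y}_{1,i},D_i,F)$, differing only in the order in which the chain rule is applied. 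The one divergence is that the paper declares these cross terms exactly zero by independence while you treat them as only approximately zero and sketch (without completing) a correction argument; that is a difference of rigor in a step the paper itself handles tersely, not a different route, so your proposal matches the paper's proof.
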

  \begin{proof}
  	See Appendix D.  	
  \end{proof}

  Suppose receiver 1 has perfect knowledge of the frozen symbols in each block. Then the information leakage is
  \begin{align}
  L(N)&=I(M^{1:m};\mathbf{Y}_1^{1:m},D^{1:m},F)\nonumber\\
  &\leq I(M^{1:m},\textsc{E}_m,W;\mathbf{Y}_1^{1:m},D^{1:m},F).\nonumber
  \end{align}
  From the proof of Lemma \ref{lemma:3} and the fact that receiver 1 has no knowledge about the secret seed we have $L_m\leq O(N^3 2^{-N^{\beta}})$.
  Thus, by induction hypothesis we have
  \begin{align}
  L(N)&\leq \sum_{i=1}^{m-1}\big{(} L_i-L_{i+1} \big{)}+L_m \leq O(mN^3 2^{-N^{\beta}}).
  \end{align}

  \subsection{Achievable Rate Region}
  \label{S:ARR}
  
  \subsubsection{Randomness Rate}
  
  Since $w_r$ is reused in a frame, the generating rate of randomness required by our channel prefixing scheme is
  \begin{equation}
  R_r=\frac{1}{mN}\big{(} |\mathcal{H}^{(N)}_{X_2|Y_1X_1UV}|+m|\mathcal{H}^{(N)}_{X_2|X_1UV}\setminus \mathcal{H}^{(N)}_{X_2|Y_1X_1UV}|\big{)}.
  \end{equation}
  From \cite[Lemma 1]{chou2015keygen} we have
  \begin{equation}
  \lim\limits_{N\rightarrow\infty}\frac{1}{N}|\mathcal{H}^{(N)}_{X_2|Y_1X_1UV})^C\setminus \mathcal{L}^{(N)}_{X_2|Y_1X_1UV}|=0.
  \end{equation}
  Then we have
  \begin{align}
  \lim\limits_{N\rightarrow\infty, m\rightarrow\infty}R_r&=\lim\limits_{N\rightarrow\infty}\frac{1}{N}|\mathcal{H}^{(N)}_{X_2|X_1UV}\cap (\mathcal{H}^{(N)}_{X_2|Y_1X_1UV})^C|\nonumber\\
  &=\lim\limits_{N\rightarrow\infty}\frac{1}{N}|\mathcal{H}^{(N)}_{X_2|X_1UV}\cap \mathcal{L}^{(N)}_{X_2|Y_1X_1UV}|\nonumber\\
  &=I(X_2;Y_1|U,V,X_1).\label{AsympRr}
  \end{align}
  As has been noted in \cite{Watanabe2014cognitive}, the difference between transmitter 2's private message and the randomness required by the encoder is just whether it carries information. We can see that (\ref{AsympRr}) is the minimum generating rate of randomness required. Thus, (\ref{ineq-6}) is achievable with our proposed scheme.
  
  \subsubsection{Private and Confidential Message Rates}
  
  From Fig. \ref{fig:chaining-P} we can see that the private and confidential message rates in our proposed scheme are
  \begin{align}
  R_2^{(p)}=\frac{1}{N}\big{(}|\mathcal{I}_{2p}|+|\mathcal{R}_2|\big{)},~~~
  R_2^{(s)}=\frac{1}{N}|\mathcal{I}_{2s}^{(1)}|,
  \end{align}
  respectively. By a similar analysis to the general wiretap polar code \cite{gulcu2017wiretap}, we have
  \begin{align}
  \lim\limits_{N\rightarrow \infty} R_2^{(p)}&=I(V;Y_1|U,X_1),\label{Rp-1}\\
  \lim\limits_{N\rightarrow \infty} R_2^{(s)}&=I(V;Y_2|U,X_1)-I(V;Y_1|U,X_1). \label{Rs-1}
  \end{align}
  Thus, (\ref{ineq-4}) can be achieved. 
  
  In the private and confidential message encoding procedure introduced in Section \ref{S:PCME}, positions in $V^{'1:N}$ allocated for private messages can also be assigned with randomness. And note that we can allocate more randomness in the encoder without sacrificing reliability or secrecy (e.g., we can replace some of the confidential message symbols with random symbols), but we can never allocate less of them.  Thus, from (\ref{AsympRr}) and (\ref{Rp-1}) we can see that (\ref{ineq-5}) can be achieved.

  \subsubsection{Common Message Rate}
  
  In Case 1, the common message rates of two transmitters in our proposed scheme are
  \begin{equation}
  \label{CMR}
  \begin{aligned}
  R_1^{(1)}&=\frac{m|\mathcal{I}_{1c}^{(0)}|+(m-1)|\mathcal{I}_{1c}^{(1a)}|}{mN}=\frac{|\mathcal{I}_{1c}^{(1)}|}{N}-\frac{|\mathcal{I}_{1c}^{(1a)}|}{mN},\\
  R^{(c)}&=\frac{m|\mathcal{I}_{2c}^{(0)}|+(m-1)|\mathcal{I}_{2c}^{(1a)}|}{mN}=\frac{|\mathcal{I}_{2c}^{(1)}|}{N}-\frac{|\mathcal{I}_{2c}^{(1a)}|}{mN},
  \end{aligned}
  \end{equation}
  respectively. From (\ref{CommonRate}) we have
  \begin{equation}
  \lim\limits_{N\rightarrow \infty,m\rightarrow \infty} R_1^{(1)}=I(X_1;Y_1),~~
  \lim\limits_{N\rightarrow \infty,m\rightarrow \infty} R^{(c)}=I(U;Y_1|X_1).  \label{C1-Rc}
  \end{equation}
  Since $\min\{I(U,X_1;Y_1),I(U,X_1;Y_2)\}=I(U,X_1;Y_1)$ in this case, if we allocate all of $R^{(c)}$ to transmitter 1's message, then (\ref{ineq-1}) is achieved. No matter how we allocate two transmitters' common messages, the sum rate of all messages always achieves (\ref{ineq-3}). 
  
  To prove the achievability of (\ref{ineq-2}) in Case 1 requires some change in the coding scheme. If we wish to maximize the sum rate of private and confidential messages, transmitter 2 will not help transmit $M_1$ at all. Therefore, whether receiver 1 can decode $U$ does not matter. Then transmitter 2 can use all of $\mathcal{I}_{2c}^{(2)}$ to transmit its own message at any rate below $I(U;Y_2|X_1)$ (now this message becomes private message). Then from (\ref{Rp-1}) and (\ref{Rs-1}) we can see that (\ref{ineq-2}) is achieved. 
  
  Another thing worth noting is that, in the above case, in order for both receivers to decode transmitter 1's message, $R_1\leq \min\{I(X_1;Y_1),I(X_1;Y_2)\}$ must hold. Then the sum rate of all messages satisfies
  \begin{align*}
  R_1+R_{2p}+R_{2s}\leq \min\{I(X_1;Y_1),I(X_1;Y_2)\}+I(U,V;Y_2|X_1),
  \end{align*}
  which may seem to violate (\ref{ineq-3}). However, since $U$ in fact carries private message in this case, it is equivalent to remove auxiliary random variable $U$ and simply design a code on $V$. We can also see this problem from the mutual information aspect. Due to the Markov chains of (\ref{constraint-1}) and (\ref{constraint-2}), we have
  \begin{align*}
  I(U,V;Y_2|X_1)&=I(V;Y_2|X_1)+I(U;Y_2|V,X_1)\\
  &=I(V;Y_2|X_1).
  \end{align*}
  With auxiliary random variable $U$ being removed, we can readily see that (\ref{ineq-3}) still holds.
  
  In Case 3-1, $R_1^{(1)}$ and $R^{(c)}$ are the same as in Case 1, thus (\ref{ineq-1}) and (\ref{ineq-3}) are achievable. As we have explained in Section \ref{S:Case34}, $\{u^{'j}\}_{j\in \mathcal{I}_{2c}^{(1b)}}$ must be assigned to transmitter 1's common message. Thus, in this case,
  \begin{equation}
  \label{Rc-2}
  \begin{aligned}
  R_2^{(c)}&\leq I(U;Y_1|X_1)-\big{(} I(U;Y_1|X_1)-I(U;Y_2|X_1) \big{)}\\
  &=I(U;Y_2|X_1).
  \end{aligned}
  \end{equation}
  Then from (\ref{Rp-1}), (\ref{Rs-1}) and (\ref{Rc-2}) we can see that (\ref{ineq-2}) is achieved ($R_{2p}$ in Theorem \ref{Capacity-2} is the sum of $R_2^{(c)}$ and $R_2^{(p)}$ here).
  
  Since Case 2 (resp. 4) is similar to Case 1 (resp. 3), and Case 3-2 is similar to Case 3-1, we can now conclude that our proposed scheme achieves the whole region in Theorem \ref{Capacity-2} under the strong secrecy criterion with randomness constraint.

 \section{Discussion}
 \label{S:Conclusion}
 
  Although our proposed polar coding scheme is designed under secrecy constraints, it can be readily modified for the case without secrecy and achieve the capacity region of the CIC given by \cite[Theorem 4]{liang2009cognitive}, since the capacity region is just a special case of the capacity-equivocation region when secrecy constraints are removed.
  
  We note some relations between our work and \cite{chou2016broadcast}, which considers polar coding for the broadcast channel with confidential messages. From Theorem \ref{Capacity-2} and \cite[Theorem 1]{chou2016broadcast} we can see that the rate region in \cite[Theorem 1]{chou2016broadcast} is a special case of that in Theorem \ref{Capacity-2}  when transmitter 1 is removed. Also, as shown in \cite{liang2009cognitive}, the region defined in Theorem \ref{Capacity-1} reduces to the capacity region of the MAC with degraded message sets if we set $Y_1=Y_2$. Thus, our proposed scheme can be seen as a general solution for the aforementioned multi-user polar coding problems.
 
 
 \section*{APPENDIX A\\Proof of Lemma 1}
 Similar to the proof of \cite[Lemma 5]{chou2016broadcast}, we have
 \begin{align*}
 \mathbb{D}(P_{X_1^{1:N}}||P_{\tilde{X}_1^{1:N}})&\leq N\delta_N,\\
 \mathbb{D}(P_{U^{1:N}X_1^{1:N}}||P_{\tilde{U}^{1:N}\tilde{X}_1^{1:N}})&\leq 2N\delta_N,\\
 \mathbb{D}(P_{U^{1:N}V^{1:N}X_1^{1:N}}||P_{\tilde{U}^{1:N}\tilde{V}^{1:N}\tilde{X}_1^{1:N}})&\leq 3N\delta_N,\\
 \mathbb{D}(P_{X_2^{1:N}V^{1:N}}||P_{\tilde{X}_2^{1:N}\tilde{V}^{1:N}})&\leq 4N\delta_N.
 \end{align*}
 Then we have
 \begin{align}
 &~~\parallel P_{U^{1:N}V^{1:N}X_1^{1:N}X_2^{1:N}Y_1^{1:N}Y_2^{1:N}}-P_{\tilde{U}^{1:N}\tilde{V}^{1:N}\tilde{X}_1^{1:N}\tilde{X}_2^{1:N}\tilde{Y}_1^{1:N}\tilde{Y}_2^{1:N}}\parallel \nonumber\\
 &= \parallel P_{X_2^{1:N}|V^{1:N}}P_{U^{1:N}V^{1:N}X_1^{1:N}}-P_{\tilde{X}_2^{1:N}|\tilde{V}^{1:N}}P_{\tilde{U}^{1:N}\tilde{V}^{1:N}\tilde{X}_1^{1:N}}\parallel \label{VD-1}\\
 &\leq \parallel P_{X_2^{1:N}|V^{1:N}}P_{U^{1:N}V^{1:N}X_1^{1:N}}-P_{\tilde{X}_2^{1:N}|\tilde{V}^{1:N}}P_{U^{1:N}V^{1:N}X_1^{1:N}}\parallel \nonumber\\
 &~~~~+ \parallel P_{\tilde{X}_2^{1:N}|\tilde{V}^{1:N}}P_{U^{1:N}V^{1:N}X_1^{1:N}}-P_{\tilde{X}_2^{1:N}|\tilde{V}^{1:N}}P_{\tilde{U}^{1:N}\tilde{V}^{1:N}\tilde{X}_1^{1:N}}\parallel \label{VD-2}\\
 &= \parallel P_{X_2^{1:N}|V^{1:N}}P_{V^{1:N}}-P_{\tilde{X}_2^{1:N}|\tilde{V}^{1:N}}P_{V^{1:N}}\parallel \nonumber\\
 &~~~~+ \parallel P_{U^{1:N}V^{1:N}X_1^{1:N}}-P_{\tilde{U}^{1:N}\tilde{V}^{1:N}\tilde{X}_1^{1:N}}\parallel \label{VD-2-1}\\
 &\leq \parallel P_{X_2^{1:N}V^{1:N}}-P_{\tilde{X}_2^{1:N}\tilde{V}^{1:N}}\parallel \nonumber\\
 &~~+\parallel P_{\tilde{X}_2^{1:N}\tilde{V}^{1:N}}-P_{\tilde{X}_2^{1:N}|\tilde{V}^{1:N}}P_{V^{1:N}}\parallel \nonumber\\
 &~~~~+ \parallel P_{U^{1:N}V^{1:N}X_1^{1:N}}-P_{\tilde{U}^{1:N}\tilde{V}^{1:N}\tilde{X}_1^{1:N}}\parallel \label{VD-3}\\
 &\leq \parallel P_{X_2^{1:N}V^{1:N}}-P_{\tilde{X}_2^{1:N}\tilde{V}^{1:N}}\parallel +\parallel P_{V^{1:N}}-P_{\tilde{V}^{1:N}}\parallel \nonumber\\
 &~~~~+ \parallel P_{U^{1:N}V^{1:N}X_1^{1:N}}-P_{\tilde{U}^{1:N}\tilde{V}^{1:N}\tilde{X}_1^{1:N}}\parallel \label{VD-3-1}\\
 &\leq \sqrt{2\log 2}\sqrt{N\delta_N}(2+2\sqrt{3}) \label{VD-4}\\
 &= O(\sqrt{N}2^{-N^\beta/2}), \nonumber
 \end{align}
 where (\ref{VD-1}), (\ref{VD-2-1}) and (\ref{VD-3-1}) hold by \cite[Lemma 17]{cuff2009communication}, and (\ref{VD-2}) and (\ref{VD-3}) hold by the triangle inequality.

 \section*{APPENDIX B\\Proof of Lemma 2}
 Let $t=|\mathcal{I}_{2s}|+|\mathcal{I}_{2p}|$ and $w=|\mathcal{F}_2|$. Denote $\{a_1,a_2,...,a_t\}=\mathcal{I}_{2s}$ with $a_1<...<a_t$, $\{b_1,b_2,...,b_w\}=\mathcal{F}_2$ with $b_1<...<b_w$, and $\{c_1,c_2,...,c_{t+w}\}=\{a_1,...,a_t,b_1,...,b_w\}$ with $c_1<...<c_{t+w}$. Let $F_c$ be short for $\{F_{1c},F_{2c},F_{11},F_{1m},F_{21},F_{2m}\}$. Then we have
 \begin{align}
 &~~~~I(M_i,E_i;\mathbf{Y}_{1,i},D_i,F)\nonumber\\
 &= H_{q_V}(M_i,E_i)-H_{q_V}(M_i,E_i|\mathbf{Y}_{1,i},D_i,F)\nonumber\\
 &=H_{q_V}(M_i,E_i)-H_{q_V}(M_i,E_i,F_{2p}|\mathbf{Y}_{1,i},D_i,F_c)\nonumber\\
 &~~~~+H_{q_V}(F_{2p}|\mathbf{Y}_{1,i},D_i,F_c)\nonumber\\
 &\leq t+w-H_{q_V}(M_i,E_i,F_{2p}|\mathbf{Y}_{1,i},\mathbf{X}_{1,i},\mathbf{U}_i)\label{lemma1-1}\\
 &= t+w-\sum_{j=1}^{t+w}H_{q_V}(\tilde{V}^{'c_j}|\tilde{Y}_{1}^{1:N},\tilde{X}_{1}^{1:N},\tilde{U}^{1:N},\tilde{V}^{'\{c_1,...,c_{j-1}\}})\nonumber\\
 &\leq \sum_{j=1}^{t+w}\big{(} 1-H_{q_V}(\tilde{V}^{'c_j}|\tilde{Y}_{1}^{1:N},\tilde{X}_{1}^{1:N},\tilde{U}^{1:N},\tilde{V}^{'1:c_{j-1}}) \big{)},\label{lemma1-2}
 \end{align}
 where (\ref{lemma1-1}) holds because $$H_{q_V}(M_i,E_i)\leq t,~~H_{q_V}(F_{2p}|\mathbf{Y}_{1,i},D_i,F_c)\leq w,$$ and 
 \begin{equation}
 H_{q_V}(M_i,E_i,F_{2p}|\mathbf{Y}_{1,i},D_i,F_c)\geq H_{q_V}(M_i,E_i,F_{2p}|\mathbf{Y}_{1,i},\mathbf{X}_{1,i},\mathbf{U}_i), \label{APA-1} \end{equation}
 which is shown in more details as follows. For $i=1$, 
 \begin{align*}
 &~~~~H_{q_V}(M_i,E_i,F_{2p}|\mathbf{Y}_{1,i},D_i,F_c)\\
 &=H_{q_V}(M_i,E_i,F_{2p}|\mathbf{Y}_{1,i},D_i,F_{1c},F_{2c},F_{11},F_{21})
 \end{align*}
 because $(F_{1m},F_{2m})$ is independent from the rest items in the left-hand-side of (\ref{APA-1}). Thus, (\ref{APA-1}) holds. For $i=m$ and $1<i<m$, we can similarly show that (\ref{APA-1}) always holds.
 
 Note that the entropies above are calculated under the induced distribution by the encoding scheme. Under the target distribution $P_{U^{1:N}V^{1:N}X_1^{1:N}Y_1^{1:N}}$, from (\ref{Construct}) we have
 \begin{align}
 H_{q_V}(V^{'c_j}|Y_{1}^{1:N},X_{1}^{1:N},U^{1:N},V^{'1:c_{j-1}})\geq 1-\delta_N.\label{lemma1-3}
 \end{align}
 From \cite[Theorem 17.3.3]{cover2012informtaion} we have
 \begin{align}
 &~~~~|H_{q_V}(\tilde{V}^{'c_j}|\tilde{Y}_{1}^{1:N},\tilde{X}_{1}^{1:N},\tilde{U}^{1:N},\tilde{V}^{'1:c_{j-1}})\nonumber\\
 &~~~~-H_{q_V}(V^{'c_j}|Y_{1}^{1:N},X_{1}^{1:N},U^{1:N},V^{'1:c_{j-1}})|\nonumber\\
 &\leq \parallel P_{U^{1:N}V^{1:N}X_1^{1:N}Y_1^{1:N}}-P_{\tilde{U}^{1:N}\tilde{V}^{1:N}\tilde{X}_1^{1:N}\tilde{Y}_1^{1:N}}\parallel \nonumber\\
 &~~~~\times \log\frac{|\mathcal{U}|^{N}|\mathcal{V}|^{N}|\mathcal{X}_1|^N|\mathcal{Y}_1|^N}{\parallel P_{U^{1:N}V^{1:N}X_1^{1:N}Y_1^{1:N}}-P_{\tilde{U}^{1:N}\tilde{V}^{1:N}\tilde{X}_1^{1:N}\tilde{Y}_1^{1:N}}\parallel}\nonumber\\
 &=O(N^2 2^{-N^{\beta}})+O(N^{\beta+1}2^{-N^{\beta}}).\label{Lemma1-2}
 \end{align}
 From (\ref{lemma1-2}) and (\ref{lemma1-3}) we have
 \begin{equation*}
 H_{q_V}(V^{'c_j}|Y_{1}^{1:N},X_{1}^{1:N},U^{1:N},V^{'1:c_{j-1}})\geq 1-O(N^2 2^{-N^{\beta}}).
 \end{equation*}
 Thus,
 \begin{equation}
 I(M_i,E_i;\mathbf{Y}_{1,i},D_i,F)\leq O(N^3 2^{-N^{\beta}}).
 \end{equation}	
 
 \section*{APPENDIX C\\Proof of Lemma 3}
 To prove Lemma \ref{lemma:2}, we first prove the following two lemmas.
 \begin{lemma}
 	\label{lemma:4}
 	For any $i\in [1,m]$,
 	\begin{align}
 	I(E_{i};W|\mathbf{Y}_1^{i:m},D^{i:m},M^{i:m})\leq O(N^3 2^{-N^{\beta}}). \nonumber
 	\end{align}
 \end{lemma}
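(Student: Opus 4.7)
I would prove this by adapting the template of Appendix B, with the shared channel-prefixing randomness $W$ now playing the role that the secret and chaining symbols $(M_i,E_i)$ played there. The defining property of $\mathcal{H}^{(N)}_{X_2|Y_1X_1UV}$ is that under the target joint distribution, every index in this set has conditional entropy at least $1-\delta_N$ given $(Y_1^{1:N},X_1^{1:N},U^{1:N},V^{1:N},U_2^{1:j-1})$. Since $W$ is loaded exactly into these positions, it should remain essentially uniform from Receiver 1's viewpoint even after $E_i$ is additionally revealed.

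Concretely, I would first reduce the target quantity via the monotonicity inequality
\begin{equation*}
I(E_i;W\mid\mathbf{Y}_1^{i:m},D^{i:m},M^{i:m})\le I(W;\mathbf{Y}_1^{i:m},D^{i:m},M^{i:m},E_i)=H(W)-H(W\mid\mathbf{Y}_1^{i:m},D^{i:m},M^{i:m},E_i),
\end{equation*}
where $H(W)=|\mathcal{H}^{(N)}_{X_2|Y_1X_1UV}|\log q_{X_2}$ by uniformity. Then, observing that $(D^{i:m},M^{i:m},E_i)$ is a deterministic function of $(\mathbf{X}_{1,j},\mathbf{U}_j,\mathbf{V}_j)_{j=i}^m$ together with the shared frozen symbols, I would replace the conditioning on $(D^{i:m},M^{i:m},E_i)$ by the richer $(\mathbf{X}_1^{i:m},\mathbf{U}^{i:m},\mathbf{V}^{i:m})$, which can only make $H(W\mid\cdots)$ smaller. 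Expanding by the chain rule over the positions of $W$ and the blocks, each resulting conditional entropy term is lower bounded by $1-\delta_N$ under the target distribution by the polarization definition. Finally, the passage from target to induced distribution is done exactly as in Appendix B, using Lemma~\ref{lemma:0} together with Cover-Thomas Theorem 17.3.3 (cf.\ (\ref{Lemma1-2})) to incur only $O(N^2 2^{-N^\beta})$ per position; summing over the $O(N)$ positions and the constant number of blocks yields the claimed $O(N^3 2^{-N^\beta})$ bound.

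The main obstacle I foresee is the reuse of $W$ across the $m-i+1$ blocks. Because the polarization inequality of Section~\ref{S:PolarPri} is a single-use statement, a literal chain-rule decomposition would place all multi-block observations on the conditioning side of one entropy term, where the inequality no longer applies verbatim. I plan to resolve this by exploiting the per-block freshness of the non-shared randomness at $\mathcal{H}^{(N)}_{X_2|X_1UV}\setminus\mathcal{H}^{(N)}_{X_2|Y_1X_1UV}$ and of the random mappings elsewhere: conditional on $W$ and the block inputs $(\mathbf{X}_{1,j},\mathbf{U}_j,\mathbf{V}_j)$, the outputs $\mathbf{Y}_{1,j}$ are independent across $j$, so the per-block polarization bound can be applied block-by-block without a multiplicative blow-up in $m$.
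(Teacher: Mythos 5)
Your proof is correct in substance but routes around the multi-block conditioning differently from the paper. The paper's own proof disposes of blocks $i+1,\dots,m$ in one line: since $E_i$ is independent of $(\mathbf{Y}_1^{i+1:m},D^{i+1:m},M^{i+1:m})$ (the chaining ties $E_i$ only to blocks $i-1$ and $i$), it writes $I(E_i;W|\mathbf{Y}_1^{i:m},D^{i:m},M^{i:m})=I(E_i;W|\mathbf{Y}_{1,i},D_i,M_i)\le H_{q_{X_2}}(W)-H_{q_{X_2}}(W|\mathbf{Y}_{1,i},\mathbf{X}_{1,i},\mathbf{U}_i,\mathbf{V}_i)$, and then invokes the single-block polarization-plus-coupling argument of Appendix B exactly as you describe. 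You instead keep all $m-i+1$ blocks in the conditioning and bound $I(W;\mathbf{Y}_1^{i:m},\mathbf{X}_1^{i:m},\mathbf{U}^{i:m},\mathbf{V}^{i:m})$ directly. This can be made rigorous, but the obstacle you flag is resolved not only by the conditional independence of the outputs given $W$ and the inputs, as you state, but also by the fact that $W$ is independent of the inputs $(\mathbf{X}_{1,j},\mathbf{U}_j,\mathbf{V}_j)_{j}$ (which are themselves correlated across blocks through chaining and the shared frozen symbols); together these let each chain-rule term $I(W;\mathbf{Y}_{1,j}\,|\,\mathbf{X}_1^{i:m},\mathbf{U}^{i:m},\mathbf{V}^{i:m},\mathbf{Y}_1^{i:j-1})$ collapse to at most the single-block leakage $I(W;\mathbf{Y}_{1,j},\mathbf{X}_{1,j},\mathbf{U}_j,\mathbf{V}_j)$. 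Note also that your route does incur a linear factor of $m-i+1$ (one single-block term per block), so the bound you actually obtain is $O(mN^3 2^{-N^\beta})$ rather than no blow-up at all; this is harmless since $m$ is held fixed and the paper's final leakage bound carries the factor $m$ anyway, but the paper's reduction to a single block avoids it for this lemma. What the paper's route buys is brevity; what yours buys is that it does not lean on the somewhat delicate step of dropping the future blocks from the conditioning outright.
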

 \begin{proof}
 	Since $E_{i}$ is independent from $(\mathbf{Y}_1^{i+1:m},D^{i+1:m},M^{i+1:m})$, we have
 	\begin{align}
 	&~~~~I(E_{i};W|\mathbf{Y}_1^{i:m},D^{i:m},M^{i:m}) \nonumber\\
 	&=I(E_{i};W|\mathbf{Y}_{1,i},D_{i},M_{i})\nonumber\\
 	&=H_{q_{X_2}}(W|\mathbf{Y}_{1,i},D_{i},M_{i})-H_{q_{X_2}}(W|\mathbf{Y}_{1,i},D_{i},M_{i},E_{i})\nonumber\\
 	&\leq H_{q_{X_2}}(W)-H_{q_{X_2}}(W|\mathbf{Y}_{1,i},\mathbf{X}_{1,i},\mathbf{U}_i,\mathbf{V}_i). \nonumber
 	\end{align}
 	Then we can prove this lemma similar to the proof of Lemma \ref{lemma:1}.
 	
\end{proof}	
 
 \begin{lemma}
 	\label{lemma:5}
 	For any $i\in [1,m-1]$,
 	\begin{align}
 	&I(\mathbf{Y}_{1,i},D_i,F,M_i,E_i;\mathbf{Y}_1^{i+1:m},D^{i+1:m},M^{i+1:m}|W)\nonumber\\
 	&~~~~ -I(E_{i+1},W;\mathbf{Y}_1^{i+1:m},D^{i+1:m},F,M^{i+1:m})\nonumber\\
 	&~~~~~~- I(\mathbf{Y}_{1,i},D_i,F,M_i,E_i;\mathbf{Y}_1^{i+1:m},D^{i+1:m},M^{i+1:m})\nonumber\\
 	&~~~~~~~~+I(W;\mathbf{Y}_1^{i+1:m},D^{i+1:m},M^{i+1:m})\nonumber\\
 	&\leq O(N^3 2^{-N^{\beta}}). \nonumber
 	\end{align}
 \end{lemma}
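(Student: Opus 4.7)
The plan is to collapse the four-term LHS into a simpler difference using mutual-information chain-rule identities, show that the difference is $O(N\delta_N)$ under the target distribution $P^{\ast}$ by exploiting the Markov structure of the chaining construction, and then transfer the bound to the encoder-induced distribution $P$ using Lemma~1. For brevity, let $X\triangleq(\mathbf{Y}_{1,i},D_i,F,M_i,E_i)$ and $Y\triangleq(\mathbf{Y}_1^{i+1:m},D^{i+1:m},M^{i+1:m})$.

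First, the chain-rule identity $I(X,W;Y)=I(X;Y)+I(W;Y|X)=I(W;Y)+I(X;Y|W)$ telescopes the $I(X;Y)$ and $I(W;Y)$ contributions, reducing the LHS to the compact difference $I(W;Y|X)-I(E_{i+1},W;Y,F)$. Further chain-rule expansions, inserting $E_{i+1}$ into the conditioning of the first term and separating the $F$ and $Y$ parts of the second, rewrite this difference as a sum of conditional mutual information terms built out of $(W,E_{i+1},F)$ and the block-$i$ variables.

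The key observation is that under $P^{\ast}$ the only quantities linking block~$i$ to blocks $i{+}1,\dots,m$ are the shared frozen symbols $F$, the reused channel-prefixing randomness $W$, and the chaining piece $E_{i+1}$, which by construction equals block~$i$'s $\mathcal{R}_2$ symbols; all remaining messages and per-block random symbols are drawn independently. Hence $P^{\ast}$ satisfies the Markov chain $(X\setminus F)\leftrightarrow(F,W,E_{i+1})\leftrightarrow Y$, and additionally $F$, $W$ and $E_{i+1}$ are mutually independent under $P^{\ast}$. Under these properties, the above expansion collapses: most conditional MI terms vanish, and the residuals are sums of entropy gaps of the form $1-H_{q_V}(\tilde V^{'j}\mid\cdot)$ or $H_{q_{X_2}}(\cdot\mid\cdot)\leq\delta_N$ indexed by positions inside the polarized sets $\mathcal{H}^{(N)}$ and $\mathcal{L}^{(N)}$ defined in Sections~\ref{S:PCME} and \ref{S:PolarScheme}. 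Each individual gap is $O(\delta_N)$ by definition, and summing $O(N)$ of them yields the $O(N\delta_N)$ target-distribution bound.

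The transfer from $P^{\ast}$ to the induced $P$ is identical in spirit to Appendices~A and~B: Lemma~1 supplies $\parallel P-P^{\ast}\parallel\leq O(\sqrt{N}\,2^{-N^\beta/2})$, and \cite[Thm.~17.3.3]{cover2012informtaion} converts each TVD bound into an $O(N^2 2^{-N^\beta})$ additive slack per entropy term; summing $O(N)$ such slacks produces the claimed $O(N^3 2^{-N^\beta})$ bound. The hardest step is establishing the Markov chain $(X\setminus F)\leftrightarrow(F,W,E_{i+1})\leftrightarrow Y$ rigorously: one must audit every polarized index in both the common-message and the private/confidential-message encodings of every block, verifying that the dependencies it introduces flow only through $(F,W,E_{i+1})$. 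This is especially delicate for the cross-transmitter chaining of Cases~3--4, where common-message indices on one transmitter are copied into the other's signal; once this Markov structure is pinned down, the remaining algebra is mechanical.
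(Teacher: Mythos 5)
Your proposal takes essentially the same route as the paper's Appendix~C: after the chain-rule telescoping, both arguments rest on the conditional independence of block $i$ from blocks $i{+}1,\dots,m$ given $(F,W,E_{i+1})$, the mutual independence of $F$, $W$ and $E_{i+1}$, and a residual term (which the paper isolates as $I(E_{i+1};W\mid \mathbf{Y}_1^{i+1:m},D^{i+1:m},M^{i+1:m})$ in Lemma~\ref{lemma:4}) controlled by the polarized entropy gaps together with the total-variation bound of Lemma~\ref{lemma:0}. The paper merely organizes the algebra differently, expanding the first two terms and absorbing the last two at the end rather than telescoping all four up front.
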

 \begin{proof}
 	\begin{align}
 	&~~~~I(\mathbf{Y}_{1,i},D_i,F,M_i,E_i;\mathbf{Y}_1^{i+1:m},D^{i+1:m},M^{i+1:m}|W) \nonumber\\
 	&~~~~~~-I(E_{i+1},W;\mathbf{Y}_1^{i+1:m},D^{i+1:m},F,M^{i+1:m})\nonumber\\
 	&\leq I(\mathbf{Y}_{1,i},D_i,F,M_i,E_i,E_{i+1};\mathbf{Y}_1^{i+1:m},D^{i+1:m},M^{i+1:m}|W)\nonumber\\
 	&~~~~ -I(E_{i+1},W;\mathbf{Y}_1^{i+1:m},D^{i+1:m},F,M^{i+1:m})\nonumber\\
 	&=I(\mathbf{Y}_{1,i},D_i,M_i,E_i;\mathbf{Y}_1^{i+1:m},D^{i+1:m},M^{i+1:m}|W,F,E_{i+1})\nonumber\\
 	&~~~~+I(F,E_{i+1};\mathbf{Y}_1^{i+1:m},D^{i+1:m},M^{i+1:m}|W)\nonumber\\
 	&~~~~~~-I(E_{i+1},W;\mathbf{Y}_1^{i+1:m},D^{i+1:m},F,M^{i+1:m})\nonumber\\
 	&=I(F,E_{i+1};\mathbf{Y}_1^{i+1:m},D^{i+1:m},M^{i+1:m},W)\nonumber\\
 	&~~~~-I(E_{i+1},W;\mathbf{Y}_1^{i+1:m},D^{i+1:m},F,M^{i+1:m})\label{lemma5-1}\\
 	&=I(F,E_{i+1};\mathbf{Y}_1^{i+1:m},D^{i+1:m},M^{i+1:m})\nonumber\\
 	&~~~~+I(F,E_{i+1};W|\mathbf{Y}_1^{i+1:m},D^{i+1:m},M^{i+1:m})\nonumber\\
 	&~~~~~~-I(E_{i+1},W;\mathbf{Y}_1^{i+1:m},D^{i+1:m},M^{i+1:m})\nonumber\\
 	&~~~~~~~~-I(E_{i+1},W;F|\mathbf{Y}_1^{i+1:m},D^{i+1:m},M^{i+1:m})\nonumber\\
 	&=I(F;\mathbf{Y}_1^{i+1:m},D^{i+1:m},M^{i+1:m}|E_{i+1})\nonumber\\
 	&~~~~-I(W;\mathbf{Y}_1^{i+1:m},D^{i+1:m},M^{i+1:m}|E_{i+1}) \nonumber\\
 	&~~~~~~+I(F,E_{i+1};W|\mathbf{Y}_1^{i+1:m},D^{i+1:m},M^{i+1:m})\nonumber\\
 	&~~~~~~~~-I(E_{i+1},W;F|\mathbf{Y}_1^{i+1:m},D^{i+1:m},M^{i+1:m})\nonumber\\
 	&=I(F;\mathbf{Y}_1^{i+1:m},D^{i+1:m},M^{i+1:m},E_{i+1})\nonumber\\
 	&~~~~-I(W;\mathbf{Y}_1^{i+1:m},D^{i+1:m},M^{i+1:m},E_{i+1}) \nonumber\\
 	&~~~~~~+I(F,E_{i+1};W|\mathbf{Y}_1^{i+1:m},D^{i+1:m},M^{i+1:m})\nonumber\\
 	&~~~~~~~~-I(E_{i+1},W;F|\mathbf{Y}_1^{i+1:m},D^{i+1:m},M^{i+1:m}) \label{lemma5-2}\\
 	&=I(F;\mathbf{Y}_1^{i+1:m},D^{i+1:m},M^{i+1:m})\nonumber\\
 	&~~~~+I(F;E_{i+1}|\mathbf{Y}_1^{i+1:m},D^{i+1:m},M^{i+1:m})\nonumber\\
 	&~~~~~~ -I(W;\mathbf{Y}_1^{i+1:m},D^{i+1:m},M^{i+1:m},E_{i+1})\nonumber\\
 	&~~~~~~~~+I(F,E_{i+1};W|\mathbf{Y}_1^{i+1:m},D^{i+1:m},M^{i+1:m})\nonumber\\
 	&~~~~~~~~~~-I(E_{i+1},W;F|\mathbf{Y}_1^{i+1:m},D^{i+1:m},M^{i+1:m})\nonumber\\
 	&= I(F;\mathbf{Y}_1^{i+1:m},D^{i+1:m},M^{i+1:m})\nonumber\\
 	&~~~~-I(W;\mathbf{Y}_1^{i+1:m},D^{i+1:m},M^{i+1:m},E_{i+1})\nonumber\\
 	&~~~~~~+I(E_{i+1};W|\mathbf{Y}_1^{i+1:m},D^{i+1:m},M^{i+1:m}),\nonumber
 	\end{align}
 	where (\ref{lemma5-1}) holds because block $i$ and the next $m-i$ blocks are independent conditioned on $(W,F,E_{i+1})$ and $W$ is independent from $(F,E_{i+1})$, and (\ref{lemma5-2}) holds due to the independence between $E_{i+1}$ and $(F,W)$. Since \begin{align*}
 	&~~~~I(F;\mathbf{Y}_1^{i+1:m},D^{i+1:m},M^{i+1:m})\\
 	&\leq I(\mathbf{Y}_{1,i},D_i,F,M_i,E_i;\mathbf{Y}_1^{i+1:m},D^{i+1:m},M^{i+1:m})
 	\end{align*}
 	and 
 	\begin{align*}
 	&~~~~I(W;\mathbf{Y}_1^{i+1:m},D^{i+1:m},M^{i+1:m},E_{i+1})\\
 	&\geq I(W;\mathbf{Y}_1^{i+1:m},D^{i+1:m},M^{i+1:m}),
 	\end{align*}
 	by Lemma \ref{lemma:4} we can readily prove this lemma.
 \end{proof}
 
  Now we will prove Lemma \ref{lemma:2}. Since $W$, $M_i$ and $E_i$ are independent from one another, we have
  \begin{align}
  &~~~~I(W;\mathbf{Y}_1^{i:m},D^{i:m},F|M^{i:m},E_i)\nonumber\\
  &~~~~-I(E_{i+1},W;\mathbf{Y}_1^{i+1:m},D^{i+1:m},F|M^{i+1:m})\nonumber\\
  &=I(W;\mathbf{Y}_1^{i:m},D^{i:m},F,M^{i:m},E_i)\nonumber\\
  &~~~~-I(E_{i+1},W;\mathbf{Y}_1^{i+1:m},D^{i+1:m},F,M^{i+1:m}) \nonumber\\  	
  &= I(W;\mathbf{Y}_{1,i},D_i,F,M_i,E_i)\nonumber\\
  &~~~~+I(W;\mathbf{Y}_1^{i+1:m},D^{i+1:m},M^{i+1:m}|\mathbf{Y}_{1,i},D_i,F,M_i,E_i) \nonumber\\
  &~~~~~~-I(E_{i+1},W;\mathbf{Y}_1^{i+1:m},D^{i+1:m},F,M^{i+1:m})\nonumber\\  	
  &=I(W;\mathbf{Y}_{1,i},D_i,F,M_i,E_i)+I(W;\mathbf{Y}_1^{i+1:m},D^{i+1:m},M^{i+1:m}) \nonumber\\  
  &~~~~+I(\mathbf{Y}_{1,i},D_i,F,M_i,E_i;\mathbf{Y}_1^{i+1:m},D^{i+1:m},M^{i+1:m}|W) \nonumber\\  
  &~~~~~~-I(\mathbf{Y}_{1,i},D_i,F,M_i,E_i;\mathbf{Y}_1^{i+1:m},D^{i+1:m},M^{i+1:m}) \nonumber\\
  &~~~~~~~~-I(E_{i+1},W;\mathbf{Y}_1^{i+1:m},D^{i+1:m},F,M^{i+1:m})\nonumber\\
  &\leq I(W;\mathbf{Y}_{1,i},D_i,F,M_i,E_i)+O(N^3 2^{-N^{\beta}}), \label{lemma2-1}
  \end{align}
  where (\ref{lemma2-1}) holds due to Lemma \ref{lemma:5}.
  Similarly to the proof of Lemma \ref{lemma:1}, we can show that
  \begin{align*}
  I(W;\mathbf{Y}_{1,i},D_i,F,M_i,E_i)  &\leq I(W;\mathbf{Y}_{1,i},\mathbf{X}_{1,i},\mathbf{U}_i,\mathbf{V}_i)\\
  &\leq O(N^3 2^{-N^{\beta}}).
  \end{align*}
  Then we can see that Lemma \ref{lemma:2} holds.

  \section*{APPENDIX D\\Proof of Lemma 4}
  \begin{align}
  &~~~~I(M^{i:m},E_i,W;\mathbf{Y}_1^{i:m},D^{i:m},F)\nonumber\\
  &~~~~-I(M^{i+1:m},E_{i+1},W;\mathbf{Y}_1^{i+1:m},D^{i+1:m},F)\nonumber\\
  &= I(M_i,E_i,W;\mathbf{Y}_1^{i:m},D^{i:m},F|M^{i+1:m})\nonumber\\
  &~~~~+I(M^{i+1:m};\mathbf{Y}_{1,i},D_i|\mathbf{Y}_1^{i+1:m},D^{i+1:m},F)\nonumber\\
  &~~~~~~+I(M^{i+1:m};\mathbf{Y}_1^{i+1:m},D^{i+1:m},F)\nonumber\\
  &~~~~~~~~-I(M^{i+1:m},E_{i+1},W;\mathbf{Y}_1^{i+1:m},D^{i+1:m},F)\nonumber\\
  &= I(M_i,E_i,W;\mathbf{Y}_1^{i:m},D^{i:m},F|M^{i+1:m})\nonumber\\
  &~~~~+I(M^{i+1:m};\mathbf{Y}_{1,i},D_i|\mathbf{Y}_1^{i+1:m},D^{i+1:m},F)\nonumber\\
  &~~~~~~-I(E_{i+1},W;\mathbf{Y}_1^{i+1:m},D^{i+1:m},F|M^{i+1:m}) \nonumber\\
  &= I(M_i,E_i,W;\mathbf{Y}_1^{i:m},D^{i:m},F|M^{i+1:m})\nonumber\\
  &~~~~-I(E_{i+1},W;\mathbf{Y}_1^{i+1:m},D^{i+1:m},F|M^{i+1:m}) \label{lemma3-1}\\
  &=I(M_i,E_i;\mathbf{Y}_1^{i:m},D^{i:m},F|M^{i+1:m})\nonumber\\
  &~~~~+I(W;\mathbf{Y}_1^{i:m},D^{i:m},F|M^{i:m},E_i)\nonumber\\
  &~~~~~~-I(E_{i+1},W;\mathbf{Y}_1^{i+1:m},D^{i+1:m},F|M^{i+1:m}) \nonumber\\
  &=I(M_i,E_i;\mathbf{Y}_{1,i},D_i,F|M^{i+1:m})\nonumber\\
  &~~~~+I(M_i,E_i;\mathbf{Y}_1^{i+1:m},D^{i+1:m}|M^{i+1:m},\mathbf{Y}_{1,i},D_i,F)\nonumber\\
  &~~~~~~+I(W;\mathbf{Y}_1^{i:m},D^{i:m},F|M^{i:m},E_i)\nonumber\\
  &~~~~~~~~-I(E_{i+1},W;\mathbf{Y}_1^{i+1:m},D^{i+1:m},F|M^{i+1:m}) \nonumber\\
  &=I(M_i,E_i;\mathbf{Y}_{1,i},D_i,F)+I(W;\mathbf{Y}_1^{i:m},D^{i:m},F|M^{i:m},E_i)\nonumber\\
  &~~~~-I(E_{i+1},W;\mathbf{Y}_1^{i+1:m},D^{i+1:m},F|M^{i+1:m}), \label{lemma3-2}
  \end{align}
  where (\ref{lemma3-1}) holds due to the independence between $M^{i+1:m}$ and $(\mathbf{Y}_{1,i},D_i)$, and (\ref{lemma3-2}) holds because $(M_i,E_i)$ and $(\mathbf{Y}_1^{i+1:m},D^{i+1:m})$ are independent, and $M^{i+1:m}$ is independent from both $(M_i,E_i)$ and $(\mathbf{Y}_{1,i},D_i,F)$, thus $I(M_i,E_i;\mathbf{Y}_{1,i},D_i,F|M^{i+1:m})=I(M_i,E_i;\mathbf{Y}_{1,i},D_i,F)$.  
  Then by Lemma \ref{lemma:1} and \ref{lemma:2} we have
  \begin{equation}
  L_i-L_{i+1}\leq O(N^3 2^{-N^{\beta}}).
  \end{equation}

\bibliographystyle{IEEEtran}
\bibliography{Polar_CICC}

\end{document}